\newtheorem{theorem}{Theorem}[section]
\newtheorem{definition}[theorem]{Definition}
\newtheorem{proposition}[theorem]{Proposition}
\newtheorem{lemma}[theorem]{Lemma}
\newtheorem{remark}[theorem]{Remark}
\title{Thresholds for reduction-related entanglement criteria in quantum information theory}
\date{\today}
\author{Maria Anastasia Jivulescu}
\address{MAJ: Department of Mathematics,
Politehnica University of Timi\c soara,
Victoriei Square 2, 300006 Timi\c soara, Romania}
\email{maria.jivulescu@upt.ro}
\author{Nicolae Lupa}
\address{NL: Department of Mathematics,
Politehnica University of Timi\c soara,
Victoriei Square 2, 300006 Timi\c soara, Romania}
\email{nicolae.lupa@upt.ro}
\author{Ion Nechita}
\address{IN: Zentrum Mathematik, M5, Technische Universit\"at M\"unchen, Boltzmannstrasse 3, 85748 Garching, Germany
 and CNRS, Laboratoire de Physique Th\'{e}orique, IRSAMC, Universit\'{e} de Toulouse, UPS, F-31062 Toulouse, France}
\email{nechita@irsamc.ups-tlse.fr}
\subjclass[2010]{15B48, 81P45}
\keywords{quantum entanglement, reduction criterion, thresholds}
\date{\today}
\begin{document}

\begin{abstract}
We consider random bipartite quantum states obtained by tracing out one subsystem from a random, uniformly distributed, tripartite pure quantum state. We compute thresholds for the dimension of the system being traced out, so that the resulting bipartite quantum state satisfies the reduction criterion in different asymptotic regimes. We consider as well the basis-independent version of the reduction criterion (the absolute reduction criterion), computing thresholds for the corresponding eigenvalue sets. We do the same for other sets relevant in the study of absolute separability, using techniques from random matrix theory. Finally, we gather and compare the known values for the thresholds corresponding to different entanglement criteria, and conclude with a list of open questions.
\end{abstract}

\maketitle

\tableofcontents

\section{Introduction}

The notion of quantum entanglement  has been proved to be at the core of many quantum phenomena, such as teleportation, dense coding or cryptography. Moreover, it is a key ingredient to the computational power of quantum devices. Entanglement expresses inseparability, that is unusual correlations between the subsystems of a quantum system which cannot be explained by classical, non-quantum, models. Hence, a central question in the theory of quantum information and computation is detecting the presence and measuring the amount of entanglement present in a given quantum system.

One of the most efficient tools in detecting entanglement is the positive partial transpose $(\mathrm{PPT})$ criterion \cite{peres}.
It states that if a quantum state is separable, then the partial transpose with respect to one of the subsystems is positive-semidefinite.
It represents a necessary condition for separability and more often it is applied as a tool to detect entanglement:
if the partial transpose of a given state is not positive-semidefinite, then the state is entangled.
As the partial transposition criterion is obtained by applying the transposition operator over the second subsystem, it raised the question of finding other positive maps $P$ with the property $(\mathrm{id} \otimes P)(\rho)$ is not positive-semidefinite, for some entangled bipartite states $\rho$. One needs to look for such maps $P$ in the class of positive, but not completely positive applications. In \cite{horodeckiPPTcriterion}, the authors show the following converse:
a quantum state $\rho\in \mathcal M_n(\mathbb C)\otimes \mathcal M_k(\mathbb C)$ is separable if and only if $(\mathrm{id}\otimes {P})(\rho)$ is positive-semidefinite for \emph{all} positive maps ${P}:\mathcal M_k(\mathbb C)\to \mathcal M_m(\mathbb C)$ and all positive integers $m \in \mathbb N$ (actually, $m=n$ suffices, see \cite[Theorem 2]{horodeckiPPTcriterion}), where $\mathrm{id}$ is the identity map. Thus, each fixed positive map yields a necessary condition for separability; in other words, each fixed positive map yields an entanglement criterion. Note that these conditions (resp.~ criteria) become trivial for completely positive maps.

A possible choice of the positive map ${P}$ is the \emph{reduction map}
${R}:\mathcal M_k(\mathbb C)\to \mathcal M_k(\mathbb C)$, ${R}(X):=I_k\cdot\mathrm{Tr}(X)-X,$
and the corresponding separability test is called  \emph{reduction $(\mathrm{RED})$ criterion}  \cite{cerf,horodeckireduction}.
The reduction criterion is weaker than the $\mathrm{PPT}$ criterion: if a state violates the reduction criterion, then it also violates the $\mathrm{PPT}$ criterion \cite{horodeckireduction}. Conversely, there exist states (some entangled Werner states \cite{wernerentangled}) which satisfy the reduction criterion but violate the $\mathrm{PPT}$ criterion.
The two criteria are equivalent if the subsystem on which the reduction map is applied is a qubit \cite{cerf}.

Both PPT and reduction criteria are efficient theoretical tools in detecting quantum entanglement, although the former presents the obvious advantage of a more elegant form, which in addition requires less computations.
In practice however, since impure entanglement is produced,  the concept of distillation was introduced as the process to produce a pure maximally entangled state by local quantum operations and classical communication, from many copies of an arbitrary entangled state (see \cite{ben},  \cite[pp. 870]{horodeckireview}).
Horodecki proved that a PPT state is necessarily undistillable \cite{horo98}.
This result sheds light on the fact that in high dimensions there are entangled states which cannot be distilled.
These states, namely PPT entangled states, are called \emph{bound entangled states}, contrary to entangled states which can be distilled.
It is possible to show that reduction criterion and entanglement distillation are connected:  any state which violates the reduction criterion is distillable;  conversely, if a state can be distilled by a certain protocol, then the state violates the reduction criterion \cite{horodeckireduction}. This result justifies the use of the reduction criterion, even if, from a purely entanglement-detection perspective, it is weaker than the PPT criterion.

The separability problem was also approached by studying the class of  \emph{absolutely separable states} (ASEP), i.e.\  states  that
remain separable under any global unitary transformation  \cite{kus}, that means to find conditions on the spectrum that characterize absolutely separable states (constraints on the eigenvalues of a state $\rho$ guaranteeing that $\rho$ is separable with respect to any decomposition of the corresponding product tensor space \cite{knil}). This problem was first fully solved in the qubit-qubit case in \cite{verstraeteaudenaert}, and then in the qubit-qudit case in \cite{joh}.
Furthermore, it is known that there is an Euclidean ball of known radius centered at the maximally-mixed state $\frac{1}{nk} (I_n\otimes I_k)$ such that every state within this ball is separable \cite{gba} (see also \cite{zycz}), meaning that any state within this ball is actually absolutely separable. However, there exist absolutely separable states outside of this ball \cite[Appendix B]{vidaltarrach}.
In analogy to absolutely separable states, states which remain $\mathrm{PPT}/\mathrm{RED}$ under any global unitary transformation are called  \emph{absolutely $\mathrm{PPT}$ states} ($\mathrm{APPT}$) /\emph{absolutely $\mathrm{RED}$ states} ($\mathrm{ARED}$) \cite{zycz}.
Necessary and sufficient conditions on the spectrum of APPT-states are given in \cite{hil}, in the form of a finite set (albeit exponentially large in the dimension) of linear matrix inequalities.  For the case of ARED-states, necessary and sufficient conditions are given in the form of a infinite family of linear inequalities, which the spectrum has to verify \cite{jlnr}.

In this paper we approach the problem of separability and absolute-separability from a different perspective. We aim to derive \emph{thresholds} for the reduction and absolute reduction criteria and to give a complete picture of threshold points for the class of entanglement criteria.
The threshold point is defined in the following sense: given a random mixed state
$\rho_{AB} \in \mathcal M_n(\mathbb C) \otimes \mathcal M_k(\mathbb C)$, obtained by partial tracing over $\mathbb{C}^s$ a uniformly distributed, pure quantum state $x \in \mathbb C^n \otimes \mathbb C^k \otimes \mathbb C^s$, where the $s$-dimensional space is treated like an inaccessible environment,
we ask for the probability that the state satisfies an entanglement criterion.
When one (or both) of the system dimensions $n$ and $k$ are large, a \emph{threshold phenomenon} occurs: if $s \sim c\cdot f(n,k)$, for some constant $c>0$,  or $s$ is fixed, then there is a \emph{threshold value} $c_{0}$ of the scaling parameter, such that the following holds:
\begin{enumerate}
\item for all $c < c_{0}$, as dimension $nk$ grows, the probability that $\rho_{AB}$ satisfies the entangled criterion vanishes;
\item for all $c > c_{0}$, as dimension $nk$ grows, the probability that $\rho_{AB}$ satisfies the entangled criterion converges to one.
\end{enumerate}
The threshold phenomenon was introduced by Aubrun to study the PPT criterion \cite{aub}.
Our main contribution presented in this paper is to complete the computation of the thresholds for the reduction criterion given in \cite{jln} and to derive  the threshold for the absolute reduction criterion, in different asymptotic regimes.

The paper is organized as follows: Sections \ref{sec:criteria} and \ref{sec:random} aim to introduce the main concepts and notations used in the paper. In Sections \ref{sec:RED} and \ref{sec:ARED} we compute explicitly the value of thresholds for reduction and absolutely reduction criteria, in different asymptotic regimes.
In Sections \ref{sec:LSp}, \ref{sec:GER}, and \ref{sec:SEPBALL} we derive thresholds for some sets that express certain conditions on probability vectors and which approximate the set of separable states. In the last section, we gather all the results about thresholds for different sets, and we present open questions related to this subject.

\smallskip
{\noindent\it Acknowledgments.} The work of MAJ and NL was supported by a grant of the Romanian National Authority for Scientific Research, CNCS-UEFISCDI, project number PN-II-ID-JRP-2011-2/11-RO-FR/01.03.2013.
IN's research has been supported by a von Humboldt fellowship and by the ANR projects {OSQPI} {2011 BS01 008 01}, {RMTQIT}  {ANR-12-IS01-0001-01}, and {STOQ}  {ANR-14-CE25-0003}.

\section{Entanglement criteria}
\label{sec:criteria}
In this paper the set of density operators (positive-semidefinite matrices of unit trace) acting on $\mathbb C^d$  is denoted by $D_d$ and for
 bipartite quantum systems on tensor product Hilbert space $\mathbb C^n\otimes \mathbb C^k\cong\mathbb C^{nk}$  we identify $D_d$ with $D_{n,k}$,  subscripts indicating the bipartition ($n$ will denote the Hilbert space dimension of the first tensor factor and $k$ that of the second one, and both  $n,k\geq2$).

A density operator  $\rho\in \mathcal M_n(\mathbb C)\otimes \mathcal M_k(\mathbb C)$  (here  $\mathcal M_n(\mathbb C)$ denotes the space of all $n\times n$ complex matrices) is called \emph{separable} \cite{wernerentangled} if it can be written as
$$\rho=\sum_i p_i e_ie_i^* \otimes f_if_i^*$$
for $p_i\geq 0$, $\sum_i p_i=1$, and for unit vectors $e_i\in\mathbb{C}^n$, $f_i\in\mathbb{C}^k$ (throughout the paper we will identify quantum states with their density matrices).
The set of separable states \cite{wernerentangled,horodeckireview} in $D_{n,k}$  is denoted by
$$\mathrm{SEP}_{n,k}:=\{\rho\in D_{n,k}\,|\,\rho~\text{separable}\}.$$
Efficient methods for  explicit characterizations of $\mathrm{SEP}_{n,k}$ are not known and for this reason upper and lower approximations are of interest \cite{horodeckireview}.

On any matrix algebra $\mathcal M_d(\mathbb C)$, we define the \emph{reduction map},
\begin{align*}
R:\mathcal M_d(\mathbb C)\to \mathcal M_d(\mathbb C), \qquad R(X):=I_d\cdot\mathrm{Tr}(X)-X,
\end{align*}
where $I_d$ denotes an identity matrix of the appropriate dimension (here, $d$) and $\mathrm{Tr}$ is the usual, unnormalized, matrix trace. From the definition, it follows that the map $R$ is positive, i.e. $R(X) \geq 0$ whenever $X \geq 0$.

For a bipartite matrix $X= X_{AB} \in \mathcal M_n(\mathbb C) \otimes \mathcal M_k(\mathbb C)\cong \mathcal M_{nk}(\mathbb C)$, its reduction over the second subsystem ($B$) is denoted by
$$X^{red} := ({\rm id}\otimes R)(X_{AB}) = X_A \otimes I_k - X_{AB},$$
where $X_A: = (\mathrm{id} \otimes \mathrm{Tr})(X)$ denotes the partial trace over $(B)$ of the operator $X=X_{AB}$. We write the transposition map on any matrix algebra $M_d(\mathbb C)$ as $\Theta$, and we also write $\Theta(X)= X^T$; we denote the \emph{partial transposition} of a bipartite matrix $X=X_{AB}$ by
$$X^\Gamma:=({\rm id}\otimes\Theta)(X).$$
The composition of $\Theta$ with the completely positive map $R\Theta:X\mapsto I_d\cdot\mathrm{Tr}(X)-\Theta(X)$ is the reduction map $R$ defined above; one says that the reduction map $R$ is completely co-positive.

Every positive map $P$ on $\mathcal M_k(\mathbb C)$ defines an entanglement criterion \cite{horodeckiPPTcriterion,horodeckireview}: if, for $\rho\in D_{n,k}$, the matrix $({\rm id}\otimes P)(\rho)$ is not positive-semidefinite, then $\rho$ is entangled. Specializing to the reduction map $P=R$, this becomes the \emph{reduction criterion} \cite{horodeckireduction,cerf}, which is also related to the distillability of the state in question \cite{horodeckireview}. Every bipartite state whose entanglement is detected by the reduction criterion is also detected by the partial transposition criterion \cite{peres,horodeckiPPTcriterion}, which is the above criterion for the map $P=\Theta$; this follows from the above mentioned representation of $R$ as the composition of $\Theta$ with a completely positive map.

The set of density operators $\rho\in D_{n,k}$ having positive reductions with respect to the second tensor factor for the fixed tensor decomposition $\mathcal M_{nk}(\mathbb C) \cong \mathcal M_n(\mathbb C) \otimes \mathcal M_k(\mathbb C)$ is denoted by
$$\mathrm{RED}_{n,k}:= \{\rho \in D_{n,k} \,|\, \rho^{red} \geq 0\}.$$
The entanglement criterion based on positive maps \cite{horodeckiPPTcriterion} implies the inclusion $\mathrm{SEP}_{n,k}\subseteq\mathrm{RED}_{n,k}$ \cite{horodeckireduction,cerf}. Recall also that the set of states with positive partial transpose is
$$\mathrm{PPT}_{n,k}:=\{\rho\in D_{n,k}\,|\,\rho^\Gamma\geq0\}.$$
Note that, when $k=2$, the reduction and the PPT criterion are equivalent \cite{horodeckireduction,cerf,jln}, i.e.\ they detect entanglement for the same states, so that $\mathrm{PPT}_{n,2}=\mathrm{RED}_{n,2}$; in general, the reduction criterion is weaker: $\mathrm{SEP}_{n,k} \subseteq \mathrm{PPT}_{n,k} \subseteq \mathrm{RED}_{n,k}$. Furthermore, it is well known that $\mathrm{SEP}_{n,k}=\mathrm{PPT}_{n,k}$ whenever $nk\leq6$ \cite{horodeckiPPTcriterion}. Occasionally we will write $\mathrm{RED}$ instead of $\mathrm{RED}_{n,k}$ etc., as the dimensions of the subsystems will be clear from the context most of the time. For a sketch of the different sets corresponding to the criteria described above and their inclusions, see Figure \ref{fig:sets}; the figure on the left contains the set $\mathrm{RLN}$ of states satisfying the \emph{realignment} criterion \cite{cwu} (also known as the computable cross-norm criterion \cite{rud}).

\begin{figure}[htbp]
\begin{center}
\includegraphics[width=0.45\textwidth]{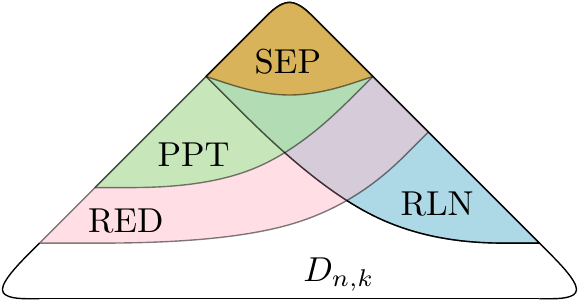} \quad \includegraphics[width=0.45\textwidth]{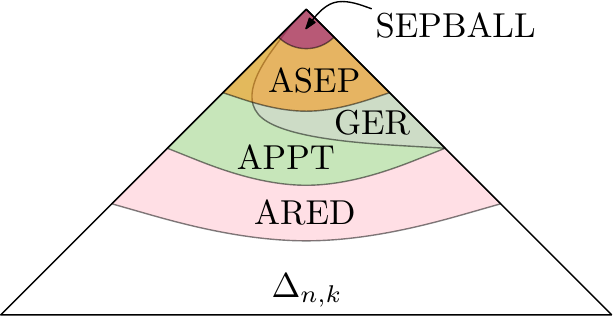}
\caption{On the left, subsets of $D_{n,k}$ corresponding to entanglement criteria and their inclusions. On the right, subsets of $\Delta_{n,k}$ corresponding to the ``absolute'' versions of the entanglement criteria, as well as the sets $\mathrm{GER}$ and $\mathrm{SEPBALL}$, see Sections \ref{sec:GER} and \ref{sec:SEPBALL}.}
\label{fig:sets}
\end{center}
\end{figure}

Let us now introduce the ``absolute'' versions of the entanglement criteria above. We denote by $\mathcal U_{nk}$ the set of unitary operators acting on $\mathbb C^{nk}$. The set of states which remain $\mathrm{RED}$ under any global unitary transformation $U\in\mathcal U_{nk}$ is denoted by $\mathrm{ARED}$ (``\emph{absolutely} $\mathrm{RED}$''):
$$\mathrm{ARED}_{n,k}~:=~\{ \rho \in D_{n,k} \,|\, \forall U \in \mathcal U_{nk}: \, (U\rho U^*)^{red} \geq 0 \} = \bigcap_{U \in \mathcal U_{nk}} U \mathrm{RED}_{n,k} U^*.$$
Similarly:
\begin{align*}
\mathrm{APPT}_{n,k}~&:=~\bigcap_{U\in\mathcal U_{nk}}U\mathrm{PPT}_{n,k}U^*\,,\\
\mathrm{ASEP}_{n,k}~&:=~\bigcap_{U\in\mathcal U_{nk}}U\mathrm{SEP}_{n,k}U^*\,.
\end{align*}
Obviously, $\mathrm{ASEP}_{n,k}\subseteq \mathrm{APPT}_{n,k}\subseteq \mathrm{ARED}_{n,k}$ and $\mathrm{AX}_{n,k}\subseteq X_{n,k}$, for $\mathrm{X}=\mathrm{SEP, PPT}$ and $\mathrm{RED}$. The question whether a quantum state $\rho$ belongs in one of the three sets introduced above depends only on the spectrum of $\rho$; this is why sometimes we identify the sets $\mathrm{AX}$ with sets of spectra:
$$\mathrm{AX}_{n,k} \subseteq \Delta_{nk}:=\{ \lambda \in \mathbb R^{nk}_+ \, | \, \sum_{i=1}^{nk} \lambda_i = 1\}.$$
There are known results on the characterizations of the sets $\mathrm{APPT}_{n,k}$ and  $\mathrm{ARED}_{n,k}$ as given by necessary and sufficient conditions in the form of families of linear inequalities which the spectrum has to verify \cite{hil, jlnr}. To date, there is no simple characterization of $\mathrm{ASEP}$; in \cite{joh}, Johnston shows that $\mathrm{ASEP}_{n,2}=\mathrm{APPT}_{n,2}$ (and thus, also equal to $\mathrm{ARED}_{n,2}$, see \cite[Proposition 5.1]{jlnr}), while in \cite{ajr} further evidence towards the conjecture that $\mathrm{ASEP}_{n,k}=\mathrm{APPT}_{n,k}$ is presented.

\section{On the spectrum of large Wishart matrices and random induced states}
\label{sec:random}
In the following, if $(a_n)$ and $(b_n)$ are some nonzero sequences, as usual, $a_n\sim b_n$ means that $a_n/b_n\to 1$ as $n\to\infty$, while $a_n=o(b_n)$ means that $a_n/b_n\to 0$ as $n\to\infty$; also, $a_n\ll b_n$ means that $b_n/a_n\to \infty$ as $n\to\infty$.

Let us first recall the notion of Wishart ensemble of random matrices:

\begin{definition}\label{def.Wishart}\rm A \emph{Wishart  matrix} of parameters $(d,s)$ is a random $d\times d$ matrix $W$ given by $W=GG^*$, where
$G\in\mathcal M_{d\times s}(\mathbb C)$  is a $d\times s$   matrix whose entries are i.i.d.~complex Gaussian random variables of zero mean and unit variance.
\end{definition}

The following well-known result describes the behavior of the spectrum of a Wishart matrix of parameters $(d,s)$ in the asymptotic regime $d\to\infty$ and $s/d\to c\in(0,\infty)$:

\begin{proposition} \label{prop:extremal-Wishart-MP}
Let $\lambda_1\geq \lambda_2\geq \cdots \geq \lambda_d\geq 0$ be the eigenvalues of a Wishart matrix of parameters $(d,s)$. Then, in the asymptotic regime $d\to \infty$ and $s=s_d\sim cd$ for some constant $c>0$, one has:
\begin{enumerate}
\item Almost surely, as $d\to\infty$, the renormalized empirical eigenvalue distribution
$$ \mu_d=\frac{1}{d}\sum_{i=1}^{d}\delta_{d^{-1}\lambda_i}$$
converges weakly to the Mar\v{c}enko-Pastur distribution
$$\pi_c=\max (1-c,0)\delta_0+\sqrt{4c-(x-1-c)^2} \,\mathbf{1}_{[(\sqrt{c}-1)^2,(\sqrt{c}+1)^2]}(x) dx; $$
\item  For any function $j_d = o(d)$, almost surely, as $d\to\infty$, the rescaled eigenvalues $\tilde \lambda_i = d^{-1}\lambda_i$ have the following limits
\[
\tilde \lambda_d,\tilde \lambda_{d-1},\ldots,\tilde \lambda_{d-j_d+1}\to a_c=
\begin{cases}
0,& \text{if } c\leq 1,\\
(\sqrt{c}-1)^2, &\text{if } c>1,
\end{cases}
\]
and
\[
\tilde \lambda_1,\tilde \lambda_{2},\ldots,\tilde \lambda_{j_d}\to b_c=(\sqrt{c}+1)^2;
\]
\item  For every \emph{fixed} fraction $p \in (0,1)$, almost surely, as $d\to\infty$, the rescaled eigenvalue $\tilde \lambda_{\lfloor pd \rfloor}$ converges to the $(1-p)$-th quantile of the Mar\v{c}enko-Pastur distribution:
$$\lim_{d \to \infty} \tilde \lambda_{\lfloor pd \rfloor} = q_{1-p},$$
where $q_{1-p}$ is uniquely defined by
$$\int_{a_c}^{q_{1-p}} d \pi_c = 1-p.$$
\end{enumerate}
\end{proposition}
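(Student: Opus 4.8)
The plan is to assemble the three assertions from two classical facts about Wishart matrices: the Mar\v{c}enko--Pastur theorem and the Bai--Yin theorem on the extreme eigenvalues. For part (1) I would run the method of moments: show that for each integer $p\ge1$ the quantity $\mathbb E\int x^p\,d\mu_d$ converges to the $p$-th moment of $\pi_c$ (finite, as $\pi_c$ is compactly supported, hence determined by its moments), and then upgrade convergence in expectation to almost sure weak convergence by bounding $\mathrm{Var}\int x^p\,d\mu_d=O(d^{-2})$ and applying Borel--Cantelli over the countably many values of $p$. Alternatively one quotes this directly from a standard reference.

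For part (2), the case $j_d\equiv1$ is precisely the Bai--Yin theorem: almost surely $\tilde\lambda_1\to b_c=(\sqrt c+1)^2$, and $\tilde\lambda_d\to(\sqrt c-1)^2$ when $c\ge1$. When $c<1$ nothing is needed at the lower edge: $W=GG^*$ has rank at most $s<d$, so at least $d-s\sim(1-c)d\gg j_d$ of its eigenvalues vanish identically and $\tilde\lambda_{d-j_d+1}=0$ for all large $d$. To go from one extreme eigenvalue to the top (respectively bottom) $j_d=o(d)$ of them, I would feed in part (1): since $\pi_c$ gives strictly positive mass to each one-sided neighbourhood $[b_c-\varepsilon,b_c]$ of the upper edge, almost sure weak convergence yields $\mu_d([b_c-\varepsilon,\infty))\to\pi_c([b_c-\varepsilon,b_c])>0$, so for $d$ large the number of eigenvalues exceeding $d(b_c-\varepsilon)$ is more than $j_d$, whence $\tilde\lambda_{j_d}\ge b_c-\varepsilon$. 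Sandwiching $b_c-\varepsilon\le\tilde\lambda_{j_d}\le\dots\le\tilde\lambda_1\to b_c$ and sending $\varepsilon\downarrow0$ settles the upper edge; the lower edge (for $c>1$, and the borderline $c=1$ where $a_c=0$ and the soft-edge form of Bai--Yin is used) is symmetric.

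For part (3), I would deduce the quantile convergence from part (1). Write $F_d$ and $F$ for the cumulative distribution functions of $\mu_d$ and $\pi_c$. Up to a one-index discrepancy that disappears after dividing by $d$, the order statistic $\tilde\lambda_{\lfloor pd\rfloor}$ is the empirical $(1-p)$-quantile, in the sense that $F_d(\tilde\lambda_{\lfloor pd\rfloor}^-)\le1-\lfloor pd\rfloor/d\le F_d(\tilde\lambda_{\lfloor pd\rfloor})$, with $1-\lfloor pd\rfloor/d\to1-p$. As the density of $\pi_c$ is continuous and strictly positive on $(a_c,b_c)$, the function $F$ is continuous and strictly increasing there, so $q_{1-p}$ (defined by $\int_{a_c}^{q_{1-p}}d\pi_c=1-p$) is its unique level-$(1-p)$ point, and for small $\varepsilon>0$ the points $q_{1-p}\pm\varepsilon$ lie in $(a_c,b_c)$, are continuity points of $F$, and satisfy $F(q_{1-p}-\varepsilon)<1-p<F(q_{1-p}+\varepsilon)$. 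Almost sure weak convergence then gives $F_d(q_{1-p}\pm\varepsilon)\to F(q_{1-p}\pm\varepsilon)$, and combining this with the monotonicity of $F_d$ and the sandwich above forces $q_{1-p}-\varepsilon\le\tilde\lambda_{\lfloor pd\rfloor}\le q_{1-p}+\varepsilon$ for all large $d$; since $\varepsilon$ is arbitrary, $\tilde\lambda_{\lfloor pd\rfloor}\to q_{1-p}$ almost surely.

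I expect the main obstacle to be bookkeeping rather than anything deep: in part (2) one must keep the regimes $c<1$, $c=1$, $c>1$ apart (the hard edge $(\sqrt c-1)^2$ exists only for $c\ge1$, and $c=1$ is the delicate case), and in part (3) one must ensure that the atom of $\pi_c$ at $0$ --- present exactly when $c\le1$ --- does not interfere with locating the quantile, which is why $q_{1-p}$ is specified through $\int_{a_c}^{q_{1-p}}d\pi_c=1-p$, picking out a point of the absolutely continuous part whenever that is possible.
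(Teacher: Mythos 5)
Your proposal is correct and follows essentially the same route as the paper, which simply cites Mar\v{c}enko--Pastur for part (1), Bai--Yin for the extreme eigenvalues, and notes that the extension to $o(d)$ eigenvalues and the bulk quantile convergence follow from weak convergence and the continuity of the limiting distribution function; your sandwich argument near the edges and your CDF-based quantile argument are precisely the ``classical arguments'' the paper leaves implicit. No substantive difference.
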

\begin{proof}
The convergence in distribution stated in the first point is the classical result of Mar\v{c}enko and Pastur \cite{mpa}. The convergence of the extreme eigenvalues of Wishart matrices has been shown by Bai and Yin  \cite[Theorem 2]{byi}, see also \cite[Theorem 5.11]{bsi}. The convergence of the eigenvalues in the bulk toward the corresponding quantile follows from the continuity of the distribution function of the limiting probability measure $\pi_c$ (except at the eventual atom in 0), see also \cite[Problem 2.4.19]{psh}. This also implies the convergence of the top (resp.~ bottom) $o(d)$ eigenvalues towards the edges of the support of $\pi_c$.
\end{proof}

We consider next the asymptotic regimes where $d \ll s$ and, respectively, $s \ll d$.

\begin{proposition}\label{prop:Wishart-dirac}
Consider a sequence of random matrices $(W_s)$, where $W_s \in \mathcal M_{d_s}(\mathbb C)$ is a Wishart matrix of parameters $(d_s, s)$, with $d_s=o(s)$. Then,
\begin{equation}\label{eq:Wishart-large-s}
\forall \varepsilon >0, \quad \lim_{s \to \infty} \mathbb P \left[ \left\| \frac{W_s}{s} - I_{d_s} \right\| \leq \varepsilon \right] = 1.
\end{equation}
Similarly, if $(\widetilde W_d)$ is a sequence of Wishart matrices of parameters $(d, s_d)$, such that $s_d = o(d)$, then
\begin{equation}\label{eq:Wishart-large-d}
\forall \varepsilon >0, \quad \lim_{d \to \infty} \mathbb P \left[ \left | \frac{\|\widetilde{W_d}\|}{d} - 1\right|  \leq \varepsilon \right] = 1.
\end{equation}
\end{proposition}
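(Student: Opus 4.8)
The plan is to reduce both statements to Proposition~\ref{prop:extremal-Wishart-MP}(2) --- which covers only the balanced Mar\v{c}enko--Pastur regime $s\sim cd$ with $c$ fixed --- by trapping the spectrum of these ``degenerate-regime'' Wishart matrices between the spectra of genuine balanced-regime Wishart matrices, using Cauchy interlacing for principal submatrices. Throughout I would fix an auxiliary integer $C\ge 2$ and send it to infinity only at the very end.

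For \eqref{eq:Wishart-large-s}: write $W_s=GG^*$ with $G\in\mathcal M_{d_s\times s}(\mathbb C)$ having i.i.d.\ standard complex Gaussian entries. Since $d_s=o(s)$, for $s$ large one has $d_s<d':=\lceil s/C\rceil<s$, and I would realize $G$ as the first $d_s$ rows of a $d'\times s$ matrix $G'$ with i.i.d.\ standard complex Gaussian entries (adjoin $d'-d_s$ independent Gaussian rows). Then $GG^*$ is the top-left $d_s\times d_s$ principal block of the $(d',s)$-Wishart matrix $W':=G'(G')^*$, so Cauchy interlacing yields $\lambda_{\min}(W')\le\lambda_{\min}(W_s)\le\lambda_{\max}(W_s)\le\lambda_{\max}(W')$; equivalently the whole spectrum of $W_s/s$ lies in $[\lambda_{\min}(W')/s,\lambda_{\max}(W')/s]$, hence
\[ \bigl\| W_s/s - I_{d_s} \bigr\| \;\le\; \max\bigl( \lambda_{\max}(W')/s-1,\; 1-\lambda_{\min}(W')/s \bigr). \]
Now $(W')$ is a sequence of $(d',s)$-Wishart matrices with $d'\to\infty$ and $s/d'\to C>1$, so Proposition~\ref{prop:extremal-Wishart-MP}(2) applies: since $\lambda_{\max}(W')/s=(d'/s)\,\lambda_{\max}(W')/d'$ and $d'/s\to 1/C$, almost surely $\lambda_{\max}(W')/s\to(1+C^{-1/2})^2$ and $\lambda_{\min}(W')/s\to(1-C^{-1/2})^2$. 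Thus the right-hand side above tends almost surely to $2C^{-1/2}+C^{-1}$, and given $\varepsilon>0$ I would pick $C$ with $2C^{-1/2}+C^{-1}<\varepsilon$, which gives $\mathbb P[\|W_s/s-I_{d_s}\|\le\varepsilon]\to 1$.

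For \eqref{eq:Wishart-large-d}: write $\widetilde W_d=\widetilde G\widetilde G^*$ with $\widetilde G\in\mathcal M_{d\times s_d}(\mathbb C)$. For the upper bound on $\|\widetilde W_d\|/d$ I would run the same trick, now adjoining \emph{columns}: since $s_d=o(d)$, for $d$ large $s_d<s':=\lceil d/C\rceil$, so realize $\widetilde G$ as the first $s_d$ columns of a $d\times s'$ Gaussian matrix $\widetilde G'$; then $\widetilde G^*\widetilde G$ is a principal submatrix of $(\widetilde G')^*\widetilde G'$, and since $M^*M$ and $MM^*$ share nonzero eigenvalues, $\|\widetilde W_d\|=\lambda_{\max}(\widetilde G^*\widetilde G)\le\lambda_{\max}((\widetilde G')^*\widetilde G')=\|\widetilde G'(\widetilde G')^*\|$, the operator norm of a $(d,s')$-Wishart matrix; Proposition~\ref{prop:extremal-Wishart-MP}(2) (with $d\to\infty$, $s'/d\to 1/C<1$) gives $\|\widetilde G'(\widetilde G')^*\|/d\to(1+C^{-1/2})^2$ almost surely. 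For the matching lower bound, comparison with a larger Wishart matrix is useless (it bounds $\|\widetilde W_d\|$ only from above), so I would estimate directly: $\|\widetilde W_d\|\ge\|\widetilde Ge_1\|^2=\sum_{i=1}^d|\widetilde G_{i1}|^2$, a sum of $d$ i.i.d.\ random variables of mean $1$ and finite variance, so Chebyshev's inequality gives $\tfrac1d\|\widetilde Ge_1\|^2\to 1$ in probability, i.e.\ $\mathbb P[\|\widetilde W_d\|/d<1-\varepsilon]\to 0$. Combining the two bounds and letting $C\to\infty$ yields \eqref{eq:Wishart-large-d}.

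The only genuinely delicate point --- the ``hard part'' --- is bridging the regime gap: Proposition~\ref{prop:extremal-Wishart-MP} is stated only for $s\sim cd$ with $c$ fixed, so everything hinges on the reduction to it, and the two halves are not symmetric. In \eqref{eq:Wishart-large-s} a two-sided interlacing sandwich suffices because deleting rows pushes the \emph{entire} rescaled spectrum of $W_s/s$ toward $1$; in \eqref{eq:Wishart-large-d} only the largest eigenvalue matters, but interlacing against a bigger Wishart controls it from above only, which is why a separate, elementary law-of-large-numbers lower bound is required. The remaining ingredients (Cauchy interlacing, the weak law of large numbers) are entirely standard.
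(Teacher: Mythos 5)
Your argument is correct, but it is worth noting that the paper itself offers no proof of this proposition: it is stated as a known fact, with a pointer to \cite{nec} ``and the references therein'' for the regime $d\ll s$ (the underlying results in the literature are Bai--Yin-type theorems for the extreme eigenvalues of sample covariance matrices with diverging aspect ratio, or concentration-of-measure estimates). Your proof is therefore a genuinely different, self-contained route: you derive both limits from the fixed-ratio statement, Proposition~\ref{prop:extremal-Wishart-MP}(2), already quoted in the paper. The mechanism is sound. For \eqref{eq:Wishart-large-s}, coupling $W_s$ as a principal block of a $(\lceil s/C\rceil,s)$-Wishart matrix and applying Cauchy interlacing does trap the whole spectrum of $W_s/s$ in $[(1-C^{-1/2})^2,(1+C^{-1/2})^2]$ up to $o(1)$, and since the identity $\|W_s/s-I_{d_s}\|=\max(\lambda_{\max}/s-1,\,1-\lambda_{\min}/s)$ is monotone under the interlacing bounds, letting $C\to\infty$ at the end gives the claim; the convergence-in-probability conclusion only uses the marginal laws, so the per-$s$ coupling is harmless. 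For \eqref{eq:Wishart-large-d}, you correctly identify the asymmetry: column-adjunction plus the coincidence of the nonzero spectra of $MM^*$ and $M^*M$ yields only the upper bound $(1+C^{-1/2})^2$ on $\|\widetilde W_d\|/d$ (using Proposition~\ref{prop:extremal-Wishart-MP}(2) with $c=1/C<1$, which is legitimate since $b_c=(\sqrt c+1)^2$ is valid for all $c>0$), and the separate lower bound $\|\widetilde W_d\|\geq\|\widetilde Ge_1\|^2$ via Chebyshev is elementary and adequate. What your approach buys is that the proposition needs no input beyond what the paper has already stated; what it costs is the slightly fussy two-parameter limit ($s$ or $d$ to infinity first, then $C$), which a direct appeal to the Bai--Yin theorem in the $s/d\to\infty$ or $s/d\to0$ regime would avoid.
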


The first statement in the result above shows that in the regime $ d \ll s$, the eigenvalues of $W_s/s$ converge to $1$ as $s\to\infty$, in probability (see \cite{nec} and the references therein). We recall the following result from \cite{cny}, which shows that the fluctuations of these eigenvalues around $1$ are semicircular.

\begin{proposition}\label{prop:Wishart-centered-normalized}
 Let $W_{d}$ be a
Wishart matrix of parameters $(d,s)$ with $s=s_d$, and let
$$Z_{d}=\sqrt{ds}\left(\frac{W_{d}}{ds}-\frac{I_d}{d}\right)$$
be its centered and renormalized version. In the asymptotic regime $1 \ll d \ll s$ (i.e.~ $s/d \to \infty$ as $d \to \infty$), the random matrix $Z_d$ converges, in moments, to a standard semicircular distribution. Moreover, for any function $j_d = o(d)$, the top (resp.~ bottom) $j_d$ eigenvalues of $Z_d$ converge to $2$ (resp.~ $-2$): almost surely, as $d \to \infty$,
\begin{align*}
\lambda^\downarrow_1 (Z_d), \ldots, \lambda^\downarrow_{j_d}(Z_d) &\to 2 \\
\lambda^\downarrow_d (Z_d), \ldots, \lambda^\downarrow_{d-j_d+1}(Z_d) &\to -2.
\end{align*}
\end{proposition}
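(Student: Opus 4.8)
The plan is to prove the two assertions in turn: first the convergence in moments of $Z_d$ to the standard semicircular law, then the convergence of the extreme $j_d$ eigenvalues to $\pm 2$. Write $W_d=\sum_{j=1}^s g_j g_j^*$, where $g_1,\dots,g_s\in\mathbb{C}^d$ are i.i.d.\ standard complex Gaussian vectors (so that $\mathbb{E}[g_jg_j^*]=I_d$), so that $Z_d=(ds)^{-1/2}(W_d-sI_d)=(ds)^{-1/2}\sum_{j=1}^s(g_jg_j^*-I_d)$ is a normalized sum of $s$ i.i.d., centered, $d\times d$ random matrices. For the bulk I would compute the expected normalized moments $\mathbb{E}\bigl[\tfrac1d\mathrm{Tr}(Z_d^p)\bigr]$: expanding the $p$-th power and grouping the sum over multi-indices $(j_1,\dots,j_p)\in\{1,\dots,s\}^p$ by the partition of $\{1,\dots,p\}$ they induce, independence and centering kill every multi-index in which some value occurs exactly once, so only partitions with all blocks of size $\ge 2$ contribute, and there are $\sim s^{m}$ multi-indices realizing a given partition with $m\le\lfloor p/2\rfloor$ blocks. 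The key computation is the ``propagator'' identity $\mathbb{E}\bigl[(gg^*-I)\,M\,(gg^*-I)\bigr]=\mathrm{Tr}(M)\,I$ for a standard complex Gaussian $g$ and deterministic $M$; iterating it along a non-crossing pair partition yields, after division by the normalization $d\,(ds)^{p/2}$, a contribution $1+o(1)$, while crossing pair partitions are suppressed by $O(d^{-2})$ and partitions with a block of size $\ge 3$ by $O((ds)^{-1})$ — and it is precisely here that the hypothesis $s\gg d$ enters. Hence $\mathbb{E}\bigl[\tfrac1d\mathrm{Tr}(Z_d^p)\bigr]\to 0$ for odd $p$ and $\to C_{p/2}$ (the Catalan number) for even $p$, i.e.\ the moments of the standard semicircular distribution; a routine second-moment estimate — expanding $\mathbb{E}\bigl[(\tfrac1d\mathrm{Tr}(Z_d^p))^2\bigr]$ and checking the variance is $O(d^{-2})$ up to smaller terms — upgrades this, via Borel--Cantelli, to almost sure convergence of the empirical spectral distribution of $Z_d$ to the semicircle.

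For the edges, note $\lambda_i^{\downarrow}(Z_d)=(ds)^{-1/2}\bigl(\lambda_i^{\downarrow}(W_d)-s\bigr)$. The lower bounds $\lambda_1^{\downarrow}(Z_d)\ge 2-\varepsilon$ and $\lambda_d^{\downarrow}(Z_d)\le -2+\varepsilon$ (eventually almost surely) are immediate from the a.s.\ convergence of the spectral distribution, since the semicircle puts positive mass near each edge. For the matching bound $\|Z_d\|\le 2+\varepsilon$ I would either run the trace-moment method at a power $p=p_d\to\infty$ growing slowly — using $\|Z_d\|^{p}\le\mathrm{Tr}(Z_d^{p})$ for even $p$, Markov's inequality, and the combinatorial estimates above made quantitative in $p$ — or invoke the second-order edge asymptotics of Wishart matrices in the regime $s/d\to\infty$ (the Mar\v{c}enko--Pastur / Bai--Yin edge, cf.\ Proposition~\ref{prop:extremal-Wishart-MP} and \cite{byi,bsi}), which give $\lambda_1^{\downarrow}(W_d)=s+2\sqrt{ds}+o(\sqrt{ds})$ and $\lambda_d^{\downarrow}(W_d)=s-2\sqrt{ds}+o(\sqrt{ds})$. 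Finally, to pass from the top eigenvalue to the top $j_d=o(d)$ eigenvalues, combine the edge bound with the a.s.\ spectral convergence: for each fixed $\varepsilon>0$ there is $\delta(\varepsilon)>0$ such that, eventually almost surely, at least $\delta(\varepsilon)\,d$ eigenvalues of $Z_d$ lie in $[2-\varepsilon,2+\varepsilon]$, and since $j_d=o(d)$ all of $\lambda_1^{\downarrow}(Z_d),\dots,\lambda_{j_d}^{\downarrow}(Z_d)$ then lie in that window; letting $\varepsilon\downarrow 0$ along a countable sequence finishes the top edge, and the bottom edge is identical (or follows from the symmetry of the semicircular law, applied to $-Z_d$).

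The main obstacle is the uniform control of the error terms in the double scaling regime $1\ll d\ll s$: unlike the classical Mar\v{c}enko--Pastur situation where $s/d$ is a fixed constant, one must verify that every non-dominant term in the partition expansion carries a genuine $o(1)$ factor under the sole hypotheses $d\to\infty$ and $s/d\to\infty$ — typically a power of $d^{-1}$ from a subleading trace or a power of $(d/s)^{1/2}$ from collapsing a free index sum — and that these factors cannot accumulate into a bounded-below defect. If one pursues the self-contained moment route for the edges, the extra difficulty is that the number of terms in $\mathrm{Tr}(Z_d^{p})$ grows super-exponentially in $p$, so the suppression factors must be made quantitative and $p_d$ chosen with care; appealing instead to the established Wishart edge asymptotics sidesteps this at the cost of an external black box. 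This is essentially the route taken in \cite{cny}, from which the present statement is quoted.
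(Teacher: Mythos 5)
Your proposal is correct in substance, but it is worth noting that the paper itself does not prove this proposition: its ``proof'' is a two-line citation --- convergence in moments is \cite[Corollary 2.5]{cny}, convergence of $\|Z_d\|$ to $2$ is \cite[Theorem 2.7]{cny} --- followed by the classical bulk-plus-edge squeeze to upgrade the norm statement to the top/bottom $o(d)$ eigenvalues. What you have written is essentially a reconstruction of the argument inside \cite{cny} (as you yourself observe), and your treatment of the $o(d)$ extension is exactly the classical argument the paper invokes: combine $\|Z_d\|\le 2+\varepsilon$ with the fact that weak convergence to the (atomless) semicircle forces at least $\delta(\varepsilon)d$ eigenvalues into $[2-\varepsilon,2+\varepsilon]$. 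Your propagator identity $\mathbb{E}[(gg^*-I)M(gg^*-I)]=\mathrm{Tr}(M)\,I$ is correct, as is the bookkeeping for non-crossing pair partitions ($s^{p/2}d^{p/2+1}$ against the normalization $d(ds)^{p/2}$, giving $1$) and the $O(d^{-2})$ suppression of crossings. Two small caveats: the suppression of partitions with a block of size $\ge 3$ is not $O((ds)^{-1})$ but rather a power of $(d/s)^{1/2}$ per odd or oversized block --- you correct this yourself in your final paragraph, and it is indeed the only place $s\gg d$ is used; and the appeal to ``Bai--Yin edge asymptotics'' for the upper bound $\|Z_d\|\le 2+\varepsilon$ needs care, since the classical statement of \cite{byi} assumes $d/s$ converges to a constant in $(0,1)$, whereas here $d/s\to 0$ --- the extension to this degenerate regime is precisely the content of \cite[Theorem 2.7]{cny}, so if you want to avoid circularity you should commit to your first option (the trace-moment method at a slowly growing power $p_d$, with the partition estimates made quantitative in $p$). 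With that choice made, your route buys a self-contained proof where the paper offers only a pointer.
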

\begin{proof}
The convergence in moments has been shown in \cite[Corollary 2.5]{cny}, while the convergence of the operator norm of $Z_d$ to $2$ has been shown in \cite[Theorem 2.7]{cny}. The extension of the norm convergence to that of a $o(d)$ number of eigenvalues is a classical argument in random matrix theory, see also the proof of Proposition \ref{prop:extremal-Wishart-MP} and \cite[Problem 2.4.19]{psh}.
\end{proof}

\bigskip
We consider now the standard model of \emph{random induced quantum states}.
Let $\psi$ be a random unit vector uniformly distributed on the unit sphere in $\mathbb{C}^d\otimes \mathbb{C}^s$.
We denote by $\mu_{d,s}$ the distribution of the random quantum state
$$\rho=\mathrm{Tr}_{\mathbb{C}^s}\left(\psi\psi^*\right) \in D_d,$$
obtained after partial tracing over the ancilla space $\mathbb{C}^s$ the Haar-distributed random pure state $\psi\psi^*$, which is called the \emph{induced measure} of parameters $(d,s)$. Any random state $\rho$ distributed according to the induced measure $\mu_{d,s}$ is called a \emph{random induced state} and we say that $\rho$ comes from the \emph{induced ensemble} with parameters $(d,s)$.

Random induced states are closely related to Wishart matrices. Indeed, it follows from \cite[Lemma 1]{nec} that if $W$ is a Wishart matrix of parameters $(d,s)$, then $\rho:=\frac{W}{\mathrm{Tr} W}$ is a random state with distribution $\mu_{d,s}$. Therefore, results about Wishart matrices can be translated to random induced states, as we can see from the following result (for more details, we refer the reader to \cite{aub} or \cite{nec}):
\begin{proposition}\label{prop.trace}
If $W$ is a Wishart matrix of parameters $(d,s)$, then for every $\varepsilon>0$,
$$\mathbb{P} \left[\left|\frac{\mathrm{Tr}W}{ds}-1\right|>\varepsilon\right]\leq C \exp(-c ds \varepsilon^2),$$
for some  $C,c>0$.
\end{proposition}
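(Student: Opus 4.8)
The plan is to reduce the statement to a textbook concentration inequality for sums of independent scalar random variables. Writing $W = GG^*$ with $G = (G_{ij})$, $1 \le i \le d$, $1 \le j \le s$, having i.i.d.\ standard complex Gaussian entries, one has
$$\mathrm{Tr}\,W = \mathrm{Tr}(GG^*) = \sum_{i=1}^{d}\sum_{j=1}^{s}|G_{ij}|^2,$$
so that $\mathrm{Tr}\,W/(ds)$ is the empirical mean of $N := ds$ i.i.d.\ copies of $Y := |G_{11}|^2$. Since $\mathbb{E}|G_{11}|^2 = 1$, each $Y$ has mean one; moreover, writing $G_{11} = a + \mathrm{i}b$ with $a, b$ independent real Gaussians of variance $1/2$, one sees that $2Y = 2a^2 + 2b^2$ is $\chi^2$-distributed with two degrees of freedom, so in particular $Y$ is sub-exponential with an absolute bound on its $\psi_1$-norm.

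First I would invoke Bernstein's inequality for sums of i.i.d.\ centred sub-exponential random variables: there are absolute constants $c, C > 0$ such that
$$\mathbb{P}\left[\left|\frac{1}{N}\sum_{\ell=1}^{N}(Y_\ell - 1)\right| > \varepsilon\right] \le C\exp\!\big(-c\,N\min(\varepsilon^2,\varepsilon)\big).$$
Applied with $N = ds$ and $Y_\ell$ ranging over the $ds$ variables $|G_{ij}|^2$, this yields precisely the claimed bound in the regime where $\varepsilon$ is bounded above (the only regime needed in the rest of the paper), since there $\min(\varepsilon^2,\varepsilon)$ and $\varepsilon^2$ agree up to a multiplicative constant. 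For unbounded $\varepsilon$ the left-hand side is controlled by the same inequality via its sub-exponential tail, or, if one insists on the literal form $\exp(-c\,ds\,\varepsilon^2)$, by letting $c$ depend on $\varepsilon$ through Cram\'er's theorem, which provides exponential-in-$ds$ decay for each fixed $\varepsilon$. Alternatively, and perhaps more transparently, one may observe that $2\,\mathrm{Tr}\,W \sim \chi^2_{2ds}$ and quote the Laurent--Massart tail estimates directly; choosing $x = ds\,\varepsilon^2/4$ there produces the bound with explicit constants $C = 2$ and $c = 1/4$ for all $\varepsilon$ in a fixed bounded interval.

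Since the statement boils down to standard concentration of measure, I do not anticipate any serious obstacle. The only points needing a little care are fixing the normalization of the complex Gaussian entries so that $\mathbb{E}|G_{ij}|^2 = 1$ (so that the empirical mean is centred at exactly $1$), and checking that the exponent genuinely carries the full product $ds$ rather than $d$ or $s$ separately — which is automatic because the defining sum runs over all $ds$ entries of $G$. As a cross-check, one can also derive the estimate from Gaussian Lipschitz concentration applied to the $1$-Lipschitz map $G \mapsto \|G\|_{\mathrm{HS}} = \sqrt{\mathrm{Tr}\,W}$ and then transfer the concentration from $\sqrt{\mathrm{Tr}\,W}$ to $\mathrm{Tr}\,W$; this route leads to the same $ds\,\varepsilon^2$ scaling.
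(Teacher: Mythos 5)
Your proof is correct; the paper itself offers no argument for this proposition, deferring to \cite{aub} and \cite{nec}, and your reduction of $\mathrm{Tr}\,W$ to a sum of $ds$ i.i.d.\ mean-one exponential (equivalently $\chi^2_{2ds}$) variables followed by Bernstein/Laurent--Massart is precisely the standard argument underlying those references. You also correctly flag the one subtlety, namely that the literal $\exp(-c\,ds\,\varepsilon^2)$ form for large $\varepsilon$ requires reading the statement as ``for each $\varepsilon$ there exist $C,c$,'' since the upper tail of a sub-exponential sum only decays like $\exp(-c\,ds\,\varepsilon)$ there.
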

The main advantage of this approach is that the distribution of a Wishart matrix is much easier to deal with than the induced measure $\mu_{d,s}$.

\bigskip

In the rest of the paper, we shall be interested in random bipartite induced quantum states: we shall assume that $d = nk$, such that we have a framework for studying the entanglement of the random states $\rho \sim \mu_{nk,s}$. Our techniques come from random matrix theory and are adapted to the study of large dimensional states ($nk \to \infty$). We shall thus consider three asymptotical regimes:
\begin{enumerate}
\item The \emph{balanced} regime: both $n,k \to \infty$;
\item The \emph{first unbalanced} regime: $n$ is fixed and $k \to \infty$;
\item The \emph{second unbalanced} regime: $k$ is fixed and $n \to \infty$.
\end{enumerate}

\section{\texorpdfstring{Thresholds for $\mathrm{RED}$ in the balanced and the first unbalanced case}{Thresholds for RED in the balanced and the first unbalanced case}}
\label{sec:RED}

We discuss in this section some questions left open in \cite{jln} regarding the thresholds for the set $\mathrm{RED}$, in two asymptotic regimes. Indeed, in \cite[Sections VII, VIII]{jln} it has been shown that for any size of the environment $s$ which behaves like $s \sim cnk$ for some positive constant $c >0$, with overwhelming probability, quantum states $\rho$ distributed along the induced measure with parameters $(nk,s)$, will satisfy the reduction criterion, in the regimes where $n$ grows and $k$ is either fixed or it grows at the same speed as $n$ (see \cite[Theorem 7.2 and Theorem 8.2]{jln}).
Hence, the thresholds for the set $\mathrm{RED}$ in these asymptotic regimes must be of smaller order than $nk$. We compute next the exact regimes and threshold values for these cases.

\begin{theorem}\label{thm:RED-balanced}
Consider a sequence of random density matrices $(\rho_n)$,  where $\rho_n$ comes from the induced ensemble with parameters $(nk_n, s_n)$. In the \emph{balanced regime}, where $n\to \infty$, $k_n \to \infty$ as $n\to \infty$ (not necessarily at the same speed) and $s_n \sim cn$ for some constant $c>0$, we have (below, $\mathbb P=\mathbb P_{n}$, as a function of $n$, denotes the probability distribution of $\rho_n$):
\begin{enumerate}
\item If $c<1$, then $\lim_{n \to \infty} \mathbb P[\rho_n \text{ has positive reduction}] = 0$;
\item If $c>1$, then $\lim_{n \to \infty} \mathbb P[\rho_n \text{ has positive reduction}] = 1$.
\end{enumerate}
In other words, the threshold for the reduction criterion in the balanced regime is $c=1$, on the scale $s_n \sim cn$.
\end{theorem}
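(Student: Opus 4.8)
The plan is to translate the positivity condition $\rho_n^{red} \geq 0$ into a spectral condition on a Wishart matrix and then apply the asymptotics of Propositions \ref{prop:extremal-Wishart-MP}--\ref{prop.trace}. First I would pass from the induced ensemble to the Wishart picture: writing $\rho_n = W/\mathrm{Tr}\,W$ with $W$ a Wishart matrix of parameters $(nk_n, s_n)$, positivity of $\rho_n^{red}$ is equivalent to positivity of $W^{red} = W_A \otimes I_{k_n} - W$, where $W_A = (\mathrm{id}\otimes\mathrm{Tr})(W)$. So the whole question is whether $W_A \otimes I_{k_n} \geq W$. Now the key structural observation is that $W_A$ is itself a Wishart matrix: if $G$ is the $nk_n \times s_n$ Gaussian matrix with $W = GG^*$, then grouping the rows according to the tensor factorization $\mathbb{C}^{n}\otimes\mathbb{C}^{k_n}$ and tracing out the second factor, $W_A = G_A G_A^*$ where $G_A$ is an $n \times (k_n s_n)$ Gaussian matrix. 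Hence $W_A$ is Wishart of parameters $(n, k_n s_n)$.

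Next I would sharpen the bound on $W^{red}$. Since $\|W^{red}\|$ could be large through $W$, the useful estimate is on the smallest eigenvalue: $W^{red} \geq 0$ holds as soon as $\lambda_{\min}(W_A) \geq \|W\|$ (using $W_A \otimes I_{k_n} \geq \lambda_{\min}(W_A) I_{nk_n} \geq \|W\| I_{nk_n} \geq W$), and conversely if $W^{red} \not\geq 0$ then in particular, testing against vectors of the form $e \otimes f$, one gets $\langle e, W_A e\rangle < \langle e\otimes f, W (e\otimes f)\rangle \leq \|W\|$ for some unit $e$, so $\lambda_{\min}(W_A) < \|W\|$ forces no contradiction directly — so for the converse direction I would instead need a genuine obstruction. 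The clean route is: in regime $c>1$, show $\lambda_{\min}(W_A)/ (k_n s_n) \to (\sqrt{c_A}-1)^2$ with the relevant Wishart ratio $c_A = k_n s_n / n \to \infty$ (since $k_n \to \infty$ and $s_n \sim cn$), so by Proposition \ref{prop:Wishart-dirac} (applied with $d = n = o(k_n s_n)$) one has $\lambda_{\min}(W_A) \sim k_n s_n$; meanwhile $\|W\|$, with $W$ Wishart of parameters $(nk_n, s_n)$ and ratio $s_n/(nk_n) \sim c/k_n \to 0$, satisfies $\|W\| \sim nk_n$ by \eqref{eq:Wishart-large-d}. Comparing: $k_n s_n \sim c n k_n \gg n k_n$ precisely when $c > 1$, which gives $\lambda_{\min}(W_A) > \|W\|$ with probability tending to $1$, hence positive reduction. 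For $c < 1$ the same two estimates give $k_n s_n \sim c n k_n < n k_n$, i.e.\ $\lambda_{\min}(W_A) < \|W\|$; to conclude that reduction \emph{fails} I would exhibit an explicit test vector — take $e$ a bottom eigenvector of $W_A$ and $f$ a top eigenvector of the compression $W|_{e\otimes\mathbb{C}^{k_n}}$; one needs $\langle f, (W|_{e})f\rangle$ to be close to $\|W\|$, which requires a small additional argument that the top of $W$ is "visible" inside some slice $e \otimes \mathbb{C}^{k_n}$, or alternatively a more careful eigenvalue-interlacing/variational comparison showing $\lambda_{\min}(W^{red}) < 0$ whp.

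The main obstacle I anticipate is exactly this lower-bound (the $c<1$, "reduction fails") direction: an operator norm bound on $W$ alone does not immediately produce a negative eigenvalue of $W^{red}$, because the top eigenvector of $W$ need not be a product vector, so one cannot just plug it in. I expect the fix is to use that $W^{red} = W_A \otimes I_{k_n} - W$ and bound its smallest eigenvalue from above by $\langle v, W^{red} v \rangle$ for a cleverly chosen product $v = e \otimes f$, combined with concentration: for a \emph{fixed} unit vector $e$, the compression $(e^* \otimes I_{k_n}) W (e \otimes I_{k_n})$ is itself (conditionally) a Wishart-type matrix of parameters $(k_n, s_n)$ whose operator norm is of order $\max(k_n, s_n) = \max(k_n, cn)$ — and $\langle e, W_A e\rangle$ is of order $k_n s_n \sim c n k_n$. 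When $c<1$ and, say, $k_n$ grows slower than $n$, we would compare $cnk_n$ against something of order $cn \cdot (\text{fluctuation})$; this is where the precise Marchenko--Pastur edge constants from Proposition \ref{prop:extremal-Wishart-MP} enter and the factor $(\sqrt{c}\pm1)^2$ produces the threshold $c=1$. I would handle the two sub-regimes ($k_n \ll n$, $k_n \gg n$, $k_n \sim n$) uniformly by always normalizing $W$ by $s_n \sim cn$ and tracking which of $n, k_n$ dominates. Finally, I would invoke Proposition \ref{prop.trace} to replace $W$ by $\rho_n = W/\mathrm{Tr}\,W$ at negligible cost, since $\mathrm{Tr}\,W$ concentrates around $nk_n s_n$ and the sign of $W^{red}$ is unchanged by positive rescaling.
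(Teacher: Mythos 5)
Your treatment of the case $c>1$ is essentially the paper's argument: recognize $W_A$ as a Wishart matrix of parameters $(n, k_n s_n)$ with $n = o(k_n s_n)$, and compare $\lambda_{\min}(W_A)\sim k_n s_n$ (Proposition~\ref{prop:Wishart-dirac}, \eqref{eq:Wishart-large-s}) against $\|W\|\sim nk_n$ (\eqref{eq:Wishart-large-d}); that half is correct.

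The $c<1$ direction, however, contains a genuine gap, and the repair you sketch cannot work. You propose to exhibit a negative eigenvalue of $W^{red}=W_A\otimes I_{k_n}-W$ by testing against a \emph{product} vector $e\otimes f$ (bottom eigenvector of $W_A$ tensored with a top eigenvector of the compression of $W$ to the slice $e\otimes\mathbb C^{k_n}$). But for every unit product vector one has
\[
\bigl\langle e\otimes f,\,(W_A\otimes I_{k_n}-W)\,(e\otimes f)\bigr\rangle
\;=\;\sum_{g\perp f}\bigl\langle e\otimes g,\,W\,(e\otimes g)\bigr\rangle\;\geq\;0,
\]
where the sum runs over an orthonormal basis of $f^\perp$ in $\mathbb C^{k_n}$; this is nothing but the block-positivity of the reduction map applied to the positive matrix $W$. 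So no product test vector can ever witness the failure of the reduction criterion --- the negativity lives only on entangled vectors, and your fallback plan is structurally doomed rather than merely incomplete. The observation you are missing is that Proposition~\ref{prop:Wishart-dirac}, \eqref{eq:Wishart-large-s} gives a \emph{two-sided} operator bound, $\|W_A/(k_ns_n)-I_n\|\leq\varepsilon$ with high probability, hence the operator inequality $W_A\otimes I_{k_n}\leq(1+\varepsilon)k_ns_n I_{nk_n}$, valid against \emph{every} unit vector. One may therefore plug in the (generically entangled) top eigenvector $v$ of $W$ and get $\langle v, W^{red}v\rangle\leq(1+\varepsilon)k_ns_n-(1-\varepsilon)nk_n<0$ once $\varepsilon$ is chosen so that $(1-\varepsilon)n>(1+\varepsilon)s_n$, which is possible precisely because $c<1$. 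This is the paper's argument for this case. (A minor point: Proposition~\ref{prop.trace} is not needed here, since the sign of $\rho^{red}$ is invariant under positive rescaling, so one can work with $W$ directly.)
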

\begin{proof}
Instead of working with the induced measure for random quantum states, we shall use the simpler Wishart ensemble, since the reduction criterion is scale invariant:
$$(\mathrm{id} \otimes R)\left( \frac{W}{\mathrm{Tr} W} \right) \geq 0 \iff (\mathrm{id} \otimes R)(W) \geq 0.$$
To this end, consider a sequence $(W_n)$ of Wishart matrices of parameters $(nk_n, s_n)$, and define
$$Q_n = W_{A,n} \otimes I_{k_n} - W_n.$$
Let us first assume that $c>1$. Choose any $\varepsilon>0$ small enough such that $c(1-\varepsilon) > 1+\varepsilon$ and thus, for $n$ large enough, $s_n (1 - \varepsilon) > n(1+\varepsilon)$. From the partial trace property of Wishart matrices, it follows that the matrix $W_{A,n}$ follows a Wishart distribution of parameters $(n, k_ns_n)$. Since $n = o(k_n s_n)$, it follows by Proposition \ref{prop:Wishart-dirac} \eqref{eq:Wishart-large-s} that
\begin{equation}\label{eq:c-g-1-1}
\lim_{n \to \infty}  \mathbb P \left[ \frac{W_{A,n} \otimes I_{k_n} }{k_n s_n} \geq (1-\varepsilon)I_{nk_n} \right] = 1.
\end{equation}
Similarly, using Proposition \ref{prop:Wishart-dirac} \eqref{eq:Wishart-large-d}, we have
\begin{equation}\label{eq:c-g-1-2}
\lim_{n \to \infty}  \mathbb P \left[ \frac{W_n }{nk_n } \leq (1+\varepsilon)I_{nk_n} \right] = \lim_{n \to \infty}  \mathbb P \left[ \frac{W_n }{k_n s_n } \leq \frac{n(1+\varepsilon)}{s_n}I_{nk_n} \right]  =1.
\end{equation}
We conclude in the case $c>1$ by combining equations \eqref{eq:c-g-1-1} and \eqref{eq:c-g-1-2}.

Let us now move on to the case $c<1$, and show that, in this regime, with large probability as $n \to \infty$, the random matrix $Q_n$ is not positive-semidefinite. We proceed in a similar fashion: using Proposition \ref{prop:Wishart-dirac}, we have
\begin{equation}\label{eq:c-s-1-1}
\lim_{n \to \infty}  \mathbb P \left[ W_{A,n} \otimes I_{k_n}  \leq (1+\varepsilon) k_n s_n I_{nk_n} \right] = 1
\end{equation}
and
\begin{equation}\label{eq:c-s-1-2}
\lim_{n \to \infty}  \mathbb P \left[ \|W_n \| \geq (1-\varepsilon)nk_n \right] =1.
\end{equation}
Choosing $\varepsilon>0$ small enough such that, for $n$ large enough, $ (1-\varepsilon)n >  (1+\varepsilon)s_n$, and using \eqref{eq:c-s-1-1}-\eqref{eq:c-s-1-2}, we can conclude.
\end{proof}

\begin{theorem}\label{thm:RED-first-unbalanced}
Consider the \emph{first unbalanced regime}, where $n$ and $s$ are fixed integers, and $k \to \infty$. Let $(\rho_k)$ be a sequence of quantum states, where $\rho_k$ comes from the induced ensemble with parameters $(nk,s)$.
\begin{enumerate}
\item If $s<n$, then $\lim_{k \to \infty} \mathbb P[\rho_k \text{ has positive reduction}] = 0$.
\item If $s>n$, then $\lim_{k \to \infty} \mathbb P[\rho_k \text{ has positive reduction}] = 1$.
\end{enumerate}
In other words, the threshold for the reduction criterion in the first unbalanced regime is $s=n$, on the scale of bounded $s$. Here, $\mathbb P=\mathbb P_{k}$ is a function of $k$ and denotes the probability distribution of $\rho_k$.
\end{theorem}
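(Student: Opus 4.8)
The plan is to follow the strategy of the proof of Theorem~\ref{thm:RED-balanced}: replace the induced ensemble by the Wishart ensemble (legitimate because the reduction criterion is scale invariant, $\rho_k \overset{d}{=} W_k/\mathrm{Tr}(W_k)$, and $(\mathrm{id}\otimes R)(W_k) \geq 0 \iff (\mathrm{id}\otimes R)(\rho_k)\geq 0$), and then control the relevant operators using Proposition~\ref{prop:Wishart-dirac}. Concretely, let $(W_k)$ be a sequence of Wishart matrices of parameters $(nk,s)$ acting on $\mathbb C^n\otimes\mathbb C^k$, write $W_{A,k}=(\mathrm{id}\otimes\mathrm{Tr})(W_k)$, and set $Q_k := W_{A,k}\otimes I_k - W_k = (\mathrm{id}\otimes R)(W_k)$; it suffices to decide whether $\mathbb P[Q_k\geq 0]$ tends to $0$ or to $1$.

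Two estimates drive the argument. First, by the partial trace property of Wishart matrices used above, $W_{A,k}$ is a Wishart matrix of parameters $(n,ks)$; since $n$ is fixed and $ks\to\infty$, we have $n=o(ks)$, so Proposition~\ref{prop:Wishart-dirac}, equation~\eqref{eq:Wishart-large-s}, gives, for every $\varepsilon>0$,
$$\lim_{k\to\infty}\mathbb P\bigl[(1-\varepsilon)\,ks\,I_n \leq W_{A,k} \leq (1+\varepsilon)\,ks\,I_n\bigr]=1.$$
Second, $W_k$ is a Wishart matrix of parameters $(nk,s)$ with $s$ fixed and $nk\to\infty$, hence $s=o(nk)$, and Proposition~\ref{prop:Wishart-dirac}, equation~\eqref{eq:Wishart-large-d}, yields
$$\lim_{k\to\infty}\mathbb P\bigl[(1-\varepsilon)\,nk \leq \|W_k\| \leq (1+\varepsilon)\,nk\bigr]=1.$$

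In the case $s>n$: on the intersection of the two events above (of probability tending to $1$), tensoring the lower bound on $W_{A,k}$ with $I_k$ and using $W_k\leq\|W_k\|\,I_{nk}$ gives $Q_k \geq k\bigl[(1-\varepsilon)s-(1+\varepsilon)n\bigr] I_{nk}$, which is positive-semidefinite as soon as $\varepsilon<(s-n)/(s+n)$. In the case $s<n$: let $v_k$ be a unit eigenvector of $W_k$ for its largest eigenvalue; then, on the same intersection of events,
$$\langle v_k, Q_k v_k\rangle = \langle v_k,(W_{A,k}\otimes I_k)v_k\rangle - \|W_k\| \leq (1+\varepsilon)ks-(1-\varepsilon)nk = k\bigl[(1+\varepsilon)s-(1-\varepsilon)n\bigr],$$
which is negative once $\varepsilon<(n-s)/(n+s)$, so $Q_k$ is not positive-semidefinite. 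Fixing $\varepsilon$ suitably small in each case concludes.

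The argument is short precisely because $n$ and $s$ are bounded: the small fixed dimension $n$ forces the Wishart matrix $W_{A,k}$ to be essentially a scalar multiple of the identity, which neutralizes the non-commutativity between $W_{A,k}\otimes I_k$ and $W_k$ — the only real obstacle — so that crude operator-norm control of $W_k$ suffices. The one point needing care is keeping the probabilistic estimates uniform: one must fix $\varepsilon$ first, intersect the finitely many events supplied by Proposition~\ref{prop:Wishart-dirac}, and only then let $k\to\infty$. As with the companion theorems, the borderline case $s=n$ is not addressed and would presumably require a finer, fluctuation-level analysis.
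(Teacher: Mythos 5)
Your proposal is correct and follows exactly the paper's strategy: pass to the Wishart ensemble by scale invariance, identify $W_{A,k}$ as Wishart of parameters $(n,ks)$, and combine the two estimates of Proposition~\ref{prop:Wishart-dirac} with a suitably small fixed $\varepsilon$. Your treatment of the case $s<n$ via the top eigenvector of $W_k$ is precisely the detail the paper leaves to the reader, and it is carried out correctly.
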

\begin{proof}
The proof of this result is similar to that of Theorem \ref{thm:RED-balanced}. Working with Wishart matrices instead of random quantum states, the matrices $W_k$ and $W_{A,k}$ have Wishart distributions of respective parameters $(nk,s)$ and $(n,ks)$. Assume first that $s>n$. From Proposition \ref{prop:Wishart-dirac}, it follows that
\begin{equation}\label{eq:s-g-n-1}
\lim_{k \to \infty}  \mathbb P \left[ W_{A,k} \otimes I_{k}  \geq (1-\varepsilon) k s I_{nk} \right] = 1
\end{equation}
and
\begin{equation}\label{eq:s-g-n-2}
\lim_{k \to \infty}  \mathbb P \left[ \|W_k \| \leq (1+\varepsilon)nk \right] =1.
\end{equation}
Choosing $\varepsilon>0$ small enough such that $ (1-\varepsilon)s >  (1+\varepsilon)n$, and using \eqref{eq:s-g-n-1}-\eqref{eq:s-g-n-2}, we can conclude. We leave the details of the case $s<n$ to the reader.
\end{proof}

Note that the second unbalanced regime, where $k$ is fixed and $n \to \infty$, has been treated in \cite[Proposition 10.3]{jln}: the threshold, on the scale $s_n \sim cnk$, is given by
$$c=\frac{(1+\sqrt{k+1})^2}{k(k-1)}.$$

\section{\texorpdfstring{Thresholds for $\mathrm{ARED}$}{Thresholds for ARED}}
\label{sec:ARED}

Let us first recall some notations and results from \cite{jlnr}.
For a  vector $x \in \mathbb R^r_+$,  we denote by $\operatorname{rk} x$ the number of non-zero elements of $x$.
\begin{definition}[``\emph{Hat operation}'' $x\mapsto\hat x$]\label{def:hat-xi}\rm
Given  $n,k\geq 2$ and a vector $x \in \Delta_r$ with $r \leq \min(n,k)$, we associate to $x$ the pure quantum state $\psi \in \mathbb C^n \otimes \mathbb C^k$ given by
$$\psi = \sum_{i=1}^r \sqrt{x_i} e_i \otimes f_i,$$
where $(e_i)_{i=1}^n$ and $(f_j)_{j=1}^k$ are fixed orthonormal families in $\mathbb C^n$ and \ $\mathbb C^k$, respectively.
We then define $\hat x$ to be the vector of eigenvalues of the reduction $(\psi\psi^*)^{red}$ of the state $\psi\psi^*\in D_{n,k}$, taken with multiplicities as in Corollary 3.3 in \cite{jlnr}:
$$\hat x ~:=~ (\underbrace{x_1, \ldots, x_1}_{k-1 \text{ times}}, \eta_1 ,  \underbrace{x_2, \ldots, x_2}_{k-1 \text{ times}}, \ldots, \eta_{r-1}, \underbrace{x_r, \ldots, x_r}_{k-1 \text{ times}}, \underbrace{0, \ldots, 0}_{(n-r)k \text{ times}},\eta_r)\,\in\,\mathbb R^{nk},$$
where $x_i\geq\eta_i\geq x_{i+1}$ for $i\in[r-1]$ and $\eta_r=-\sum_{i=1}^{r-1}\eta_i\leq0$. The set $\{\eta_i\}_{i=1}^r\setminus\{x_i\}_{i=1}^r$ equals the set of solutions $\eta\in\mathbb R\setminus\{x_i\}_{i=1}^r$ to the equation $\sum_{i=1}^r \frac{x_i}{x_i-\eta}=1$ (for more details we refer the reader to Theorem 3.1 in \cite{jlnr}). Moreover, if $r=\operatorname{rk} x$, then $-\eta_r\leq (1-1/r)\sum_{i=1}^r x_i$, with equality if and only if $x_i=x_j$ for all $i,j\in[r]$ (see Lemma 3.2 in \cite{jlnr}).
\end{definition}

A characterization of the set $\mathrm{ARED}_{n,k}$ has been given in Theorem 4.2 in \cite{jlnr} and states as follows:
\begin{proposition}\label{prop:ared}
We have
\begin{equation}\label{eq:ared}
\mathrm{ARED}_{n,k}\,=\, \{\rho \in D_{n,k} \, | \, \forall x\in \Delta_{\min(n,k)}, \, \langle \lambda_\rho^\downarrow,\hat x^\uparrow \rangle \geq 0 \},
\end{equation}
where $\lambda_\rho^\downarrow$ is the vector of eigenvalues of $\rho$ ordered decreasingly and $\hat x^\uparrow$ is the increasingly ordered version of $\hat x$ that has been introduced in Definition \ref{def:hat-xi}.
\end{proposition}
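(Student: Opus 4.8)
The plan is to unfold the definition $\mathrm{ARED}_{n,k}=\bigcap_{U\in\mathcal U_{nk}}U\,\mathrm{RED}_{n,k}\,U^*$ into a statement about the sign of traces $\mathrm{Tr}(\rho M)$ as $M$ runs over a single explicit set of Hermitian matrices, and then to apply the classical rearrangement (von Neumann--Ky Fan) inequality. First I would observe that the partial reduction $\mathrm{id}\otimes R$ is self-adjoint for the Hilbert--Schmidt inner product, since $R(X)=I_d\,\mathrm{Tr} X-X$ is. Hence for every unit vector $\psi\in\mathbb C^n\otimes\mathbb C^k$ and every $U\in\mathcal U_{nk}$,
$$\langle\psi,(U\rho U^*)^{red}\psi\rangle=\mathrm{Tr}\big((U\rho U^*)^{red}\psi\psi^*\big)=\mathrm{Tr}\big(U\rho U^*\,(\psi\psi^*)^{red}\big)=\mathrm{Tr}\big(\rho\, U^*(\psi\psi^*)^{red}U\big).$$
Since $(U\rho U^*)^{red}\geq0$ means exactly that all such quadratic forms are nonnegative, $\rho\in\mathrm{ARED}_{n,k}$ is equivalent to $\mathrm{Tr}(\rho M)\geq0$ for every $M$ in
$$\mathcal S:=\big\{\,U^*(\psi\psi^*)^{red}U \;:\; U\in\mathcal U_{nk},\ \psi\ \text{a unit vector in }\mathbb C^n\otimes\mathbb C^k\,\big\}.$$

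Next I would identify $\mathcal S$. For fixed $\psi$, letting $U$ vary sweeps out the whole unitary orbit of the Hermitian matrix $(\psi\psi^*)^{red}$, so $\mathcal S$ is the set of all Hermitian $M\in\mathcal M_{nk}(\mathbb C)$ whose spectrum equals $\mathrm{spec}\big((\psi\psi^*)^{red}\big)$ for some unit $\psi$. Here I would bring in the Schmidt decomposition: writing $\psi=\sum_i\sqrt{x_i}\,e_i\otimes f_i$ with $x$ the (padded) vector of squared Schmidt coefficients, one has $x\in\Delta_{\min(n,k)}$, and conversely every such $x$ arises this way; by Theorem 3.1 and Corollary 3.3 of \cite{jlnr}, recorded in Definition \ref{def:hat-xi}, the eigenvalues of $(\psi\psi^*)^{red}$ with multiplicities are exactly the entries of $\hat x$. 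Therefore $\mathcal S=\{\,M=M^*\in\mathcal M_{nk}(\mathbb C):\lambda(M)=\hat x\text{ for some }x\in\Delta_{\min(n,k)}\,\}$, and the membership condition becomes: for all $x\in\Delta_{\min(n,k)}$ and all Hermitian $M$ with spectrum $\hat x$, $\mathrm{Tr}(\rho M)\geq0$.

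Finally I would invoke the rearrangement inequality for Hermitian matrices: for Hermitian $A,B$ of size $d$, $\min\{\mathrm{Tr}(AB):\lambda(B)\text{ held fixed}\}=\langle\lambda(A)^\downarrow,\lambda(B)^\uparrow\rangle$, the minimum being attained when the eigenbases of $A$ and $B$ are oppositely ordered (this follows from diagonalizing $A$ and minimizing over the unitary orbit of $B$, using Birkhoff's theorem or an extremality argument). Taking $A=\rho$ and $\lambda(B)=\hat x$ gives $\min_{M:\lambda(M)=\hat x}\mathrm{Tr}(\rho M)=\langle\lambda_\rho^\downarrow,\hat x^\uparrow\rangle$, and hence $\rho\in\mathrm{ARED}_{n,k}$ if and only if $\langle\lambda_\rho^\downarrow,\hat x^\uparrow\rangle\geq0$ for every $x\in\Delta_{\min(n,k)}$, which is \eqref{eq:ared}. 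I expect the only delicate point to be the spectral analysis of $(\psi\psi^*)^{red}$ together with the multiplicity bookkeeping (the $k-1$ copies of each $x_i$, the interlacing roots $\eta_i$, the $(n-r)k$ zeros, and the single nonpositive $\eta_r$), but this is precisely the content of \cite{jlnr} that Definition \ref{def:hat-xi} summarizes, so it may be quoted; the one thing to check by hand is that allowing zero entries in $x\in\Delta_{\min(n,k)}$ is consistent with the rank-$r$ form of the hat operation, which holds since extra zero Schmidt coefficients merely contribute additional zero eigenvalues to the reduction.
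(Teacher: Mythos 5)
Your proposal is correct and complete. Note that the paper itself does not prove this proposition at all: it is quoted verbatim from Theorem 4.2 of \cite{jlnr}, so there is no in-paper argument to compare against. Your route --- (i) self-adjointness of $\mathrm{id}\otimes R$ for the Hilbert--Schmidt pairing to convert $(U\rho U^*)^{red}\geq 0$ into $\mathrm{Tr}\bigl(\rho\,U^*(\psi\psi^*)^{red}U\bigr)\geq 0$, (ii) identification of the resulting test set as the union of unitary orbits of the matrices $(\psi\psi^*)^{red}$, whose spectra are exactly the vectors $\hat x$ by the Schmidt decomposition together with the spectral computation recorded in Definition \ref{def:hat-xi}, and (iii) the trace rearrangement inequality $\min_{\lambda(M)=\mu}\mathrm{Tr}(\rho M)=\langle\lambda_\rho^\downarrow,\mu^\uparrow\rangle$ via Birkhoff --- is the natural duality argument and almost certainly mirrors the proof in the cited reference. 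Two points you pass over quickly but which do hold: the spectrum of $(\psi\psi^*)^{red}$ depends only on the Schmidt coefficients of $\psi$ and not on the choice of orthonormal families, because the reduction is covariant under local unitaries, $\bigl((V\otimes W)X(V\otimes W)^*\bigr)^{red}=(V\otimes W)X^{red}(V\otimes W)^*$ (alternatively, this is subsumed by taking the full global unitary orbit in step (ii)); and in the Birkhoff step the set of unistochastic matrices is in general a proper subset of the doubly stochastic ones, but since it contains all permutation matrices and the linear functional is minimized at a vertex of the Birkhoff polytope, the two minima agree. Your closing remark on zero-padded $x$ is the right thing to check and is indeed consistent with the rank-$r$ form of the hat operation.
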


Let us make two general remarks about the set $\mathrm{ARED}_{n,k}$, which provide upper and lower bounds for this set:
$$ \{\lambda \, | \, \lambda_1^\downarrow \leq (k+1)\lambda_{nk}^\downarrow\} \subseteq \mathrm{ARED}_{n,k} \subseteq  \{\lambda \, | \, \operatorname{rk} \lambda \geq (n-2)k+2\}.$$

\begin{lemma}\label{lem:minimal-rank-ARED}
For any probability vector $\lambda \in \mathrm{ARED}_{n,k}$, we have $\operatorname{rk} \lambda \geq (n-2)k+2$.
\end{lemma}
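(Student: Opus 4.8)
The plan is to prove the contrapositive via the characterisation in Proposition~\ref{prop:ared}: assuming $\lambda$ is a probability vector with $\operatorname{rk}\lambda\le(n-2)k+1$, I will produce a single test vector $x$ for which $\langle\lambda^\downarrow,\hat x^\uparrow\rangle<0$, which shows $\lambda\notin\mathrm{ARED}_{n,k}$. The test vector I would use is the maximally entangled qubit pair $x=(1/2,1/2)\in\Delta_2$ (any $x\in\Delta_2$ with $x_1,x_2>0$ works just as well); this is admissible because $n,k\ge2$ forces $2\le\min(n,k)$, and the pure state attached to $x$ in Definition~\ref{def:hat-xi} only depends on the nonzero coordinates of $x$, so $\hat x^\uparrow$ is unambiguously the sorted spectrum of $(\psi\psi^*)^{red}$.

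The key step is to read off the structure of $\hat x$ from Definition~\ref{def:hat-xi} with $r=\operatorname{rk}x=2$: it consists of $k-1$ copies of $x_1$, one entry $\eta_1$ with $x_1\ge\eta_1\ge x_2>0$, $k-1$ copies of $x_2$, then $(n-2)k$ zeros, and finally the lone negative entry $\eta_2=-\eta_1<0$. Thus $\hat x$ has exactly one strictly negative coordinate, exactly $(n-2)k$ zero coordinates, and $2k-1$ strictly positive coordinates, so that
$$\hat x^\uparrow=\bigl(\eta_2,\ \underbrace{0,\dots,0}_{(n-2)k},\ \underbrace{\ast,\dots,\ast}_{2k-1}\bigr),$$
with all the $\ast$'s positive. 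Pairing against $\lambda^\downarrow$ and using that the middle block vanishes gives $\langle\lambda^\downarrow,\hat x^\uparrow\rangle=\eta_2\,\lambda_1^\downarrow+\sum_{j=(n-2)k+2}^{nk}(\hat x^\uparrow)_j\,\lambda_j^\downarrow$; but the standing hypothesis $\operatorname{rk}\lambda\le(n-2)k+1$ kills every $\lambda_j^\downarrow$ with $j\ge(n-2)k+2$, so the sum collapses to $\langle\lambda^\downarrow,\hat x^\uparrow\rangle=\eta_2\,\lambda_1^\downarrow$. Since $\eta_2<0$ and $\lambda_1^\downarrow>0$ (the entries of the probability vector $\lambda$ are nonnegative and sum to $1$), this is strictly negative, contradicting Proposition~\ref{prop:ared}; hence $\operatorname{rk}\lambda\ge(n-2)k+2$.

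I do not expect any real obstacle: once Proposition~\ref{prop:ared} is available the argument is essentially just a one-line inner-product estimate. The only points needing care are bookkeeping ones — that Definition~\ref{def:hat-xi} with $r=2$ really does yield exactly $(n-2)k$ zeros and a single negative entry (so that in $\hat x^\uparrow$ the entry $\eta_2$ sits in position $1$ and the positive entries occupy positions $(n-2)k+2,\dots,nk$), that $x\in\Delta_2$ is a legitimate test vector, and that $\lambda_1^\downarrow>0$. It is worth noting that this is precisely the right-hand inclusion in the two-sided bound for $\mathrm{ARED}_{n,k}$ stated just before the lemma.
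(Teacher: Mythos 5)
Your argument is correct and is essentially identical to the paper's proof: the authors also take the test vector $x=(1/2,1/2,0,\dots,0)\in\Delta_{\min(n,k)}$, observe that $\hat x$ consists of $2k-1$ entries equal to $1/2$, $(n-2)k$ zeros, and a single entry $-1/2$, and conclude that $\operatorname{rk}\lambda\le(n-2)k+1$ forces $\langle\lambda^\downarrow,\hat x^\uparrow\rangle=-\lambda_1^\downarrow/2<0$. Your write-up just spells out the bookkeeping (the values of $\eta_1,\eta_2$ and the positions of the zeros) a bit more explicitly.
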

\begin{proof}
Although the statement follows from the inclusion $\mathrm{ARED}_{n,k} \subseteq \mathrm{LS}_{2k-1}$ in \cite[Theorem 8.1]{jlnr}, we give here a direct proof. We assume $n \geq 2$ in order to avoid degenerate situations. Consider the vector $x = (1/2,1/2,0, 0,\ldots, 0) \in \Delta_{\min(n,k)}$. Using the ``hat operation''  from Definition \ref{def:hat-xi}, we have
$$\hat x = (\underbrace{1/2, \ldots, 1/2}_{2k-1 \text{ times}}, \underbrace{0, \ldots, 0}_{(n-2)k \text{ times}}, -1/2) \in \mathbb R^{nk}.$$
If $\operatorname{rk} \lambda \leq (n-2)k+1$, then, obviously, $\langle \lambda^\downarrow, \hat x^\uparrow \rangle = -\lambda^\downarrow_1/2 <0$, and thus $\lambda \notin \mathrm{ARED}_{n,k}$.
\end{proof}

\begin{lemma}\label{lem:multiple-smallest-lambda}
For any probability vector $\lambda \in \Delta_{nk}$, such that $\lambda_1^\downarrow \leq (k+1) \lambda_{nk}^\downarrow$, we have $\lambda \in \mathrm{ARED}_{n,k}$.
\end{lemma}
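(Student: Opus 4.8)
The plan is to verify directly the characterization of $\mathrm{ARED}_{n,k}$ from Proposition \ref{prop:ared}: given $\lambda \in \Delta_{nk}$ with $\lambda_1^\downarrow \leq (k+1)\lambda_{nk}^\downarrow$, I must show that $\langle \lambda_\rho^\downarrow, \hat x^\uparrow \rangle \geq 0$ for every $x \in \Delta_{\min(n,k)}$. Fix such an $x$ with $r := \operatorname{rk} x$, and recall the structure of $\hat x$ from Definition \ref{def:hat-xi}: it consists of the $(k-1)$-fold repeated values $x_1, \ldots, x_r$, the interlacing values $\eta_1, \ldots, \eta_{r-1}$ (each satisfying $x_i \geq \eta_i \geq x_{i+1} \geq 0$), a block of $(n-r)k$ zeros, and the single negative entry $\eta_r = -\sum_{i=1}^{r-1}\eta_i$. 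Thus $\hat x$ has exactly one negative component, namely $\eta_r$, so in the ordered version $\hat x^\uparrow$ the entry $\hat x_1^\uparrow = \eta_r$ is the unique negative one, and it gets paired with the largest eigenvalue $\lambda_1^\downarrow$ in the inner product; all other entries of $\hat x^\uparrow$ are nonnegative.

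The key estimate is then the lower bound
\[
\langle \lambda_\rho^\downarrow, \hat x^\uparrow \rangle \;=\; \eta_r \lambda_1^\downarrow + \sum_{j=2}^{nk} \hat x_j^\uparrow \lambda_j^\downarrow \;\geq\; \eta_r \lambda_1^\downarrow + \lambda_{nk}^\downarrow \sum_{j=2}^{nk} \hat x_j^\uparrow,
\]
where I have used that all the remaining $\hat x_j^\uparrow$ are $\geq 0$ together with $\lambda_j^\downarrow \geq \lambda_{nk}^\downarrow$. Now $\sum_{j=2}^{nk}\hat x_j^\uparrow = \sum_j \hat x_j - \eta_r = 1 - \eta_r$ (using that the entries of $\hat x$ are eigenvalues of a trace-one reduction, so they sum to $1$, and that $\eta_r$ is the single negative term removed), which is $\geq 1$ since $\eta_r \leq 0$. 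Combining, and writing $-\eta_r = |\eta_r| \geq 0$, it suffices to show $\lambda_{nk}^\downarrow(1 + |\eta_r|) \geq |\eta_r|\lambda_1^\downarrow$, i.e. $\lambda_1^\downarrow/\lambda_{nk}^\downarrow \leq 1 + 1/|\eta_r|$. Here I invoke the sharp bound from Definition \ref{def:hat-xi}: $-\eta_r \leq (1 - 1/r)\sum_{i=1}^r x_i = 1 - 1/r$, so $|\eta_r| \leq 1 - 1/r \leq 1 - 1/k$ because $r \leq \min(n,k) \leq k$. Hence $1 + 1/|\eta_r| \geq 1 + 1/(1-1/k) = 1 + k/(k-1) \geq k+1$ for $k \geq 2$ — wait, $1 + k/(k-1) = (2k-1)/(k-1)$, and since $(2k-1)/(k-1) \geq k+1 \iff 2k-1 \geq (k+1)(k-1) = k^2-1 \iff 2k \geq k^2 \iff k \leq 2$, this direct comparison only works at $k=2$; for larger $k$ I should instead argue more carefully by not discarding $\sum_{j=2}^{nk}\hat x_j^\uparrow \geq 1$ too early.

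The fix, and the step I expect to require the most care, is to keep the full sum: I need $\lambda_1^\downarrow |\eta_r| \leq \lambda_{nk}^\downarrow(1 + |\eta_r|)$, and using $\lambda_1^\downarrow \leq (k+1)\lambda_{nk}^\downarrow$ it is enough that $(k+1)|\eta_r| \leq 1 + |\eta_r|$, i.e. $|\eta_r| \leq 1/k$. But $|\eta_r|$ can be as large as $1 - 1/r$, which exceeds $1/k$ in general, so discarding the other terms is too lossy; one must exploit that the large values of $\hat x^\uparrow$ (the blocks of size $k-1$ containing $x_1^\downarrow \geq |\eta_r|/(r-1)$, roughly) are paired against eigenvalues $\lambda_j^\downarrow$ that are themselves not much smaller than $\lambda_1^\downarrow$, OR — more robustly — one replaces the crude bound $\sum_{j=2}^{nk}\hat x_j^\uparrow \lambda_j^\downarrow \geq \lambda_{nk}^\downarrow \sum_{j=2}^{nk}\hat x_j^\uparrow$ on only the $(n-r)k$ zero-block and instead pairs the $k-1$ copies of $x_1$ with $k-1$ of the top eigenvalues. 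Concretely: since $\hat x^\uparrow$ has at least $k-1$ entries equal to $x_1^\downarrow(\hat x) = \max_i x_i$ near the top, and $\hat x_2^\uparrow \geq \eta_{r-1} \geq 0$ etc., the standard Hardy–Littlewood–Pólya rearrangement lower bound gives enough slack; I would organize this by grouping terms so that $|\eta_r|\lambda_1^\downarrow$ is offset by the contribution $(k-1)(\max_i x_i)\lambda_{nk}^\downarrow \geq (k-1)\cdot\frac{|\eta_r|}{r-1}\cdot\lambda_{nk}^\downarrow \geq \frac{k-1}{k-1}|\eta_r|\lambda_{nk}^\downarrow = |\eta_r|\lambda_{nk}^\downarrow$ (since $r-1 \leq k-1$ and $\max_i x_i \geq |\eta_r|/(r-1)$ as $|\eta_r| = \sum_{i=1}^{r-1}\eta_i \leq (r-1)\max x_i$), plus $|\eta_r|\lambda_{nk}^\downarrow$ from the identity $\sum_{j\geq 2}\hat x_j^\uparrow \geq 1 \geq |\eta_r|$ applied to whatever is left, yielding $2|\eta_r|\lambda_{nk}^\downarrow$ against $|\eta_r|\lambda_1^\downarrow \leq (k+1)|\eta_r|\lambda_{nk}^\downarrow$ — still not quite enough, so the honest approach is to track all $k-1$ top copies against $\lambda_1^\downarrow, \ldots, \lambda_{k-1}^\downarrow \geq \lambda_{nk}^\downarrow$ and the remaining mass carefully, which is the computational heart of the argument.
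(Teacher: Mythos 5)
Your first, ``crude'' bound is in fact exactly the paper's argument, and it works; you abandoned it only because of an arithmetic slip. The entries of $\hat x$ are the eigenvalues of $(\psi\psi^*)^{red}=\psi_A\otimes I_k-\psi\psi^*$, whose trace is $k\cdot 1-1=k-1$, not $1$ (check: $(k-1)\sum_{i=1}^r x_i+\sum_{i=1}^{r}\eta_i=(k-1)+\eta_r-\eta_r=k-1$). Hence $\sum_{j=2}^{nk}\hat x_j^\uparrow=(k-1)-\eta_r$ rather than $1-\eta_r$, and your estimate becomes
\[
\langle \lambda^\downarrow,\hat x^\uparrow\rangle \;\geq\; \eta_r\lambda_1^\downarrow+\lambda_{nk}^\downarrow\bigl((k-1)-\eta_r\bigr)\;=\;(k-1)\lambda_{nk}^\downarrow+\eta_r\bigl(\lambda_1^\downarrow-\lambda_{nk}^\downarrow\bigr).
\]
Now $|\eta_r|\leq 1-1/r\leq 1-1/k$ (since $r\leq\min(n,k)\leq k$) and $\lambda_1^\downarrow-\lambda_{nk}^\downarrow\leq k\lambda_{nk}^\downarrow$ by hypothesis, so the negative term is at most $(k-1)\lambda_{nk}^\downarrow$ in absolute value; more precisely the right-hand side is at least $\frac{k-1}{k}\bigl[(k+1)\lambda_{nk}^\downarrow-\lambda_1^\downarrow\bigr]\geq 0$. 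This is precisely the proof in the paper, which expands the inner product block by block, replaces every eigenvalue multiplying a nonnegative coordinate of $\hat x^\uparrow$ by $\lambda_{nk}^\downarrow$, and keeps $\eta_r\lambda_1^\downarrow$.

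As submitted, however, the proposal has a genuine gap. Having (wrongly) concluded from $\sum_j\hat x_j=1$ that the direct bound is too lossy, you pivot to a rearrangement/pairing argument that you yourself note is ``still not quite enough'' and leave unfinished, so no complete proof is actually given; the extra machinery (pairing the $k-1$ copies of $\max_i x_i$ with top eigenvalues, the bound $\max_i x_i\geq|\eta_r|/(r-1)$, etc.) is unnecessary once the trace is corrected. The only missing ingredient is the single correction above; everything else in your setup (the unique negative entry $\eta_r$ paired with $\lambda_1^\downarrow$, the bound $-\eta_r\leq(1-1/r)\sum_i x_i$, the trivial case $r=1$) is right and matches the paper.
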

\begin{proof}
For any $x \in \Delta_{\min(n,k)}$ of rank $r=\operatorname{rk} x$, we have
\begin{align*}
\langle \lambda^\downarrow,\hat x^\uparrow \rangle&=x_1\left(\lambda^\downarrow_{nk}+\cdots+\lambda^\downarrow_{(n-1)k+2}\right)+\eta_1\lambda^\downarrow_{(n-1)k+1}   \\
&\quad +x_2\left(\lambda^\downarrow_{(n-1)k}+\cdots+\lambda^\downarrow_{(n-2)k+2}\right)+\eta_2\lambda^\downarrow_{(n-2)k+1}   \\
&\quad \cdots \\
&\quad +x_r\left(\lambda^\downarrow_{(n-r+1)k}+\cdots+\lambda^\downarrow_{(n-r)k+2}\right)
+\eta_{r}\lambda^\downarrow_1 \\
&\geq (k-1)(x_1+ \cdots + x_r)\lambda_{nk}^\downarrow + (\eta_1 + \cdots + \eta_{r-1})\lambda_{nk}^\downarrow + \eta_{r}\lambda^\downarrow_1 \\
&= (k-1) \lambda_{nk}^\downarrow + \eta_r(\lambda_{1}^\downarrow - \lambda_{nk}^\downarrow)\\
&\geq (k-1) \lambda_{nk}^\downarrow - (1-1/r)(\lambda_{1}^\downarrow - \lambda_{nk}^\downarrow)\\
&\geq (k-1) \lambda_{nk}^\downarrow - (1-1/k)(\lambda_{1}^\downarrow - \lambda_{nk}^\downarrow)\\
&= \frac{k-1}{k} \left[(k+1)\lambda_{nk}^\downarrow -\lambda_{1}^\downarrow \right]\geq 0.
\end{align*}
By Proposition \ref{prop:ared} it follows that $\lambda\in \mathrm{ARED}_{n,k}$, which concludes the proof.
\end{proof}

\subsection{\texorpdfstring{Threshold for $\mathrm{ARED}$ in the balanced case}{Threshold for ARED in the balanced case}}

\begin{theorem}\label{thm:threshold-ARED-balanced}
Consider the \emph{balanced} asymptotic regime, where  $n\to \infty$ and $k_n\to\infty$ as $n\to\infty$, and write $s=s_n$ as a function of $n$. Let $\rho_n$ be a random induced state distributed according to the induced measure $\mu_{nk_n,s_n}$. Almost surely, as $n\to\infty$ and $s_n\sim c nk_n$ for some constant $c>0$,
\begin{enumerate}
\item  if $c>1$, then $\rho_n \in \mathrm{ARED}_{n,k_n}$;
\item  if $c<1$, then $\rho_n \not\in \mathrm{ARED}_{n,k_n}$.
\end{enumerate}
\end{theorem}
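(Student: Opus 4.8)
The plan is to sandwich the set $\mathrm{ARED}_{n,k_n}$ between the two elementary bounds proved in Lemmas~\ref{lem:minimal-rank-ARED} and~\ref{lem:multiple-smallest-lambda}, and then to show that a random induced state $\rho_n\sim\mu_{nk_n,s_n}$ falls inside the easy lower bound when $c>1$, and violates the easy upper bound when $c<1$. Since the reduction (and absolute reduction) criterion is scale invariant, I will work throughout with a Wishart matrix $W_n$ of parameters $(nk_n,s_n)$ and set $\rho_n=W_n/\mathrm{Tr}\,W_n$; the eigenvalues of $\rho_n$ are the eigenvalues of $W_n$ divided by $\mathrm{Tr}\,W_n$, and Proposition~\ref{prop.trace} lets me replace $\mathrm{Tr}\,W_n$ by its mean $nk_ns_n$ up to a multiplicative $(1\pm\varepsilon)$ factor with overwhelming probability.

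\emph{The case $c>1$.} Here I want $\lambda^\downarrow_1(\rho_n)\leq (k_n+1)\lambda^\downarrow_{nk_n}(\rho_n)$, which by Lemma~\ref{lem:multiple-smallest-lambda} forces $\rho_n\in\mathrm{ARED}_{n,k_n}$. In terms of $W_n$, this is $\|W_n\|\leq (k_n+1)\lambda^\downarrow_{nk_n}(W_n)$. The dimension is $d=nk_n$ and the environment is $s_n\sim c nk_n$, so the ratio $s_n/d\to c>1$; in this genuinely balanced Mar\v{c}enko--Pastur regime Proposition~\ref{prop:extremal-Wishart-MP}(2) gives, almost surely, $\|W_n\|/(nk_ns_n)\to (\sqrt c+1)^2/(c\,nk_n)\cdot$\,(rescaling)\,, and more precisely $\lambda^\downarrow_1(W_n)\sim s_n(\sqrt c+1)^2\cdot\frac{1}{c}$\, while $\lambda^\downarrow_{nk_n}(W_n)\sim s_n(\sqrt c-1)^2\cdot\frac1c$ (the bottom edge is strictly positive exactly because $c>1$). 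Hence the ratio $\lambda^\downarrow_1(W_n)/\lambda^\downarrow_{nk_n}(W_n)$ converges almost surely to the finite constant $(\sqrt c+1)^2/(\sqrt c-1)^2$, whereas the allowed factor $k_n+1\to\infty$. So for $n$ large the inequality holds with room to spare, and the first claim follows.

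\emph{The case $c<1$.} Here I invoke the upper bound of Lemma~\ref{lem:minimal-rank-ARED}: if $\operatorname{rk}\lambda_{\rho_n}\leq (n-2)k_n+1$ then $\rho_n\notin\mathrm{ARED}_{n,k_n}$. Now $\rho_n$ has rank equal to $\operatorname{rk} W_n=\min(nk_n,s_n)$, which for $c<1$ equals $s_n$ almost surely once $n$ is large, since $s_n\sim cnk_n<nk_n$. It therefore suffices to check that $s_n\leq (n-2)k_n+1$ eventually; but $s_n/(nk_n)\to c<1$ while $((n-2)k_n+1)/(nk_n)\to 1$, so indeed $s_n<(n-2)k_n+1$ for $n$ large, and $\rho_n\notin\mathrm{ARED}_{n,k_n}$ almost surely. (One should note the mild degeneracy that if $n$ is bounded one cannot guarantee $n\geq 3$; but in the balanced regime $n\to\infty$, so $(n-2)k_n+1$ is eventually positive and dominates $s_n$.)

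\emph{Main obstacle.} The $c<1$ direction is essentially immediate once one knows the rank of a Wishart matrix; the real content is the $c>1$ direction, where I need genuine spectral information rather than just rank. The potential subtlety there is that $k_n$ is allowed to grow at an arbitrary rate relative to $n$, so the pair $(d,s)=(nk_n,s_n)$ always has $s/d\to c$ but $d$ itself may grow very fast or relatively slowly — however this does not matter, because Proposition~\ref{prop:extremal-Wishart-MP} only requires $d\to\infty$ and $s\sim cd$, both of which hold. The one place to be careful is the almost-sure versus in-probability distinction and the replacement of $\mathrm{Tr}\,W_n$ by $nk_ns_n$: I will phrase the final conclusion as holding almost surely along the sequence, using Proposition~\ref{prop.trace} together with Borel--Cantelli (the bound $C\exp(-c\,nk_ns_n\varepsilon^2)$ is summable), exactly as in the proofs of Theorems~\ref{thm:RED-balanced} and~\ref{thm:RED-first-unbalanced}.
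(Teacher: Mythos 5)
Your proposal is correct and follows essentially the same route as the paper: for $c>1$ it verifies the sufficient condition $\lambda_1^\downarrow\leq(k_n+1)\lambda_{nk_n}^\downarrow$ of Lemma~\ref{lem:multiple-smallest-lambda} via the Mar\v{c}enko--Pastur edge asymptotics (the paper phrases this as $s_n(\lambda_{nk_n}^\downarrow-\lambda_1^\downarrow/(k_n+1))\to a_c>0$, you phrase it as a ratio bounded while $k_n+1\to\infty$, which is the same fact), and for $c<1$ it uses the rank obstruction of Lemma~\ref{lem:minimal-rank-ARED} exactly as the paper does. The extra care about $\mathrm{Tr}\,W_n$ is harmless but unnecessary, since both conditions involved are scale-invariant.
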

\begin{proof}
Let us start with the easier, second point: if $c<1$ then, for $n$ large enough, we have  $s_n < (n-2)k_n+2$, so, by Lemma \ref{lem:minimal-rank-ARED}, no eigenvalue vector $\lambda$ sampled from the induced measure with parameters $(nk_n, s_n)$ will be an element of $\mathrm{ARED}_{n,k_n}$ (with non-zero probability).

In the case $c>1$, we show that the hypothesis of Lemma \ref{lem:multiple-smallest-lambda} is satisfied, almost surely as $n \to \infty$. Indeed, using Proposition \ref{prop:extremal-Wishart-MP}, we have the following almost sure limit:
$$\lim_{n \to \infty} s_n \left( \lambda_{nk_n}^\downarrow - \frac{ \lambda_{1}^\downarrow}{k_n+1} \right) = a_c-0 > 0,$$
and thus, by Lemma \ref{lem:multiple-smallest-lambda}, it follows that $\rho_n \in \mathrm{ARED}_{n,k_n}$.
\end{proof}

The result above states that the threshold for $\mathrm{ARED}$ in the balanced regime is $c=1$, on the scale $s_n\sim c nk_n$.

\subsection{\texorpdfstring{Threshold for $\mathrm{ARED}$ in the first unbalanced case}{Threshold for ARED in the first unbalanced case}}

\begin{theorem}\label{thm:threshold-ARED-1st-unbalanced}
Consider the \emph{first unbalanced} asymptotic regime, where $n$ is fixed and $k \to \infty$. Let $\rho_k$ be a random induced state distributed according to the induced measure $\mu_{nk,s_k}$. Almost surely, as $k\to\infty$ and $s_k\sim ck$  for some constant $c>0$, one has:
\begin{enumerate}
\item If $c > n-2$, then $\rho_k \in \mathrm{ARED}_{n,k}$;
\item If $c < n-2$, then $\rho_k \not\in \mathrm{ARED}_{n,k}$.
\end{enumerate}
In other words, the threshold for $\mathrm{ARED}$ in the first unbalanced regime is $c=n-2$, on the scale $s_k \sim ck$.
\end{theorem}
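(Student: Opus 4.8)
The plan is to follow the template of the proof of Theorem~\ref{thm:threshold-ARED-balanced}. First I would replace the induced measure by the Wishart ensemble — this is legitimate because membership in $\mathrm{ARED}_{n,k}$, like the reduction criterion itself, is a scaling-invariant condition on the spectrum — so let $W_k$ be a Wishart matrix of parameters $(nk,s_k)$ and $\rho_k=W_k/\operatorname{Tr}W_k$, and write $\lambda^\downarrow=\lambda_{\rho_k}^\downarrow$. The case $c<n-2$ is then settled by a rank count: since $s_k\sim ck$ with $c<n-2$ we have $s_k<(n-2)k+2$ for all large $k$, and a Wishart matrix of parameters $(nk,s_k)$ with $s_k<nk$ has rank exactly $s_k$ almost surely, so $\operatorname{rk}\rho_k=s_k<(n-2)k+2$ and Lemma~\ref{lem:minimal-rank-ARED} gives $\rho_k\notin\mathrm{ARED}_{n,k}$ (the countable intersection over $k$ of probability-one events is again of probability one, so this holds almost surely for all large $k$).

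For $c>n-2$ the strategy is to verify the characterization in Proposition~\ref{prop:ared}, namely that almost surely, for all large $k$ and every $x\in\Delta_{n}$, one has $\langle\lambda^\downarrow,\hat x^\uparrow\rangle\ge 0$. Rank-one vectors $x$ give $\hat x\ge 0$ and impose no condition, so fix $x$ of rank $r\in\{2,\dots,n\}$ and use the explicit decomposition of $\langle\lambda^\downarrow,\hat x^\uparrow\rangle$ from the proof of Lemma~\ref{lem:multiple-smallest-lambda}: the unique negative contribution is $\eta_r\lambda_1^\downarrow$, with $0\le-\eta_r\le(1-1/r)\sum_i x_i=1-1/r\le 1-1/n$ by Definition~\ref{def:hat-xi}, and all other weights ($x_i$ with multiplicity $k-1$, the nonnegative $\eta_1,\dots,\eta_{r-1}$, and $(n-r)k$ zeros) are paired monotonically against $\lambda^\downarrow$. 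Into this I would feed Proposition~\ref{prop:extremal-Wishart-MP}, applied with $d=nk$ and Mar\v{c}enko-Pastur ratio $c/n$: almost surely $\lambda_1^\downarrow\to 0$, while $\rho_k$ has $s_k\sim ck>(n-2)k$ non-zero eigenvalues, so the block of eigenvalues indexed from $(n-2)k+2$ to $s_k$ is non-empty and its total is bounded below by a positive constant in the limit (computable from $\pi_{c/n}$). When $c>n$ one can shortcut everything: $\rho_k$ is then of full rank and, exactly as in the balanced case, $s_k\bigl(\lambda_{nk}^\downarrow-\lambda_1^\downarrow/(k+1)\bigr)\to(\sqrt{c/n}-1)^2>0$ almost surely, so the hypothesis of Lemma~\ref{lem:multiple-smallest-lambda} holds for $k$ large and $\rho_k\in\mathrm{ARED}_{n,k}$.

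The real work lies in the range $n-2<c\le n$, where $\rho_k$ is rank-deficient ($\lambda_{nk}^\downarrow=0$), so Lemma~\ref{lem:multiple-smallest-lambda} no longer applies and one must show directly that the positive part of $\langle\lambda^\downarrow,\hat x^\uparrow\rangle$ dominates $(1-1/n)\lambda_1^\downarrow$ \emph{uniformly over $x\in\Delta_n$}. The hard point, which I expect to be the main obstacle, is precisely this uniformity: the weight multiplying the non-vanishing block of eigenvalues of $\rho_k$ is essentially the smallest Schmidt coefficient $x_r$, which can tend to $0$, whereas the corresponding $|\eta_r|$ (a root of $\sum_i x_i/(x_i-\eta)=1$) need not shrink proportionally, so a crude bound is not enough and one must exploit the fine dependence of $\eta_r$ on $x$ — in particular the sharp inequality $-\eta_r\le(1-1/r)\sum_i x_i$, an equality only at the flat vector — to absorb the lone negative term. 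One organizes the estimate by the rank $r$ of $x$, and for each $r$ according to which blocks of $\hat x$ land on the non-zero versus the zero part of the spectrum of $\rho_k$; the value $c=n-2$ is exactly the borderline at which, for every $r\le n$, a block of $\hat x$ that would otherwise be paired only with zero eigenvalues begins to overlap the bulk. A secondary technical step, familiar from Proposition~\ref{prop:extremal-Wishart-MP}, is to upgrade the operator-norm-type statements to the handful of extreme eigenvalues that the small-multiplicity blocks of $\hat x$ actually see.
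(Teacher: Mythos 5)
Your handling of part (2) (the rank count via Lemma \ref{lem:minimal-rank-ARED}) and of part (1) when $c>n$ (full rank, Lemma \ref{lem:multiple-smallest-lambda} plus Proposition \ref{prop:extremal-Wishart-MP} with Mar\v{c}enko--Pastur ratio $c/n>1$) agrees with the paper. The problem is the range $n-2<c\le n$, which you correctly identify as the crux but leave as an unproved program. The obstacle you describe --- the positive weight on the surviving eigenvalues is $x_2$, while the negative coefficient $\eta_r$ shrinks only like $\sqrt{x_2}$ --- is not merely technical: it cannot be overcome, because statement (1) is false for $n-2<c<n-1$. Concretely, for $x=(1-\delta,\delta,0,\dots,0)$ one has $\eta_1=-\eta_2=\sqrt{(1-\delta)\delta}$ and
\[
\langle \lambda^\downarrow,\hat x^\uparrow\rangle
=-\sqrt{(1-\delta)\delta}\,\lambda^\downarrow_1
+\delta\sum_{i=(n-2)k+2}^{(n-1)k}\lambda^\downarrow_i
+\sqrt{(1-\delta)\delta}\,\lambda^\downarrow_{(n-1)k+1}
+(1-\delta)\sum_{i=(n-1)k+2}^{nk}\lambda^\downarrow_i .
\]
If $c<n-1$ then $\operatorname{rk}\rho_k=s_k\leq (n-1)k$ almost surely, so the last two terms vanish and the expression equals $-\sqrt{(1-\delta)\delta}\,\lambda^\downarrow_1+\delta S$ with $S\leq 1$; as $\delta\to 0$ this behaves like $-\sqrt{\delta}\,\lambda^\downarrow_1+O(\delta)$, hence is negative for $\delta$ small enough (depending on $k$), and Proposition \ref{prop:ared} then gives $\rho_k\notin\mathrm{ARED}_{n,k}$. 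Letting $\delta\to0$ shows more generally that membership in $\mathrm{ARED}_{n,k}$ forces either $\operatorname{rk}\lambda\geq(n-1)k+2$ or $\lambda^\downarrow_1=\lambda^\downarrow_{(n-1)k+1}$, so the correct threshold in this regime is $c=n-1$ --- the threshold of $\mathrm{LS}_k$ (Section \ref{sec:LSp}, third case with $t=1/n$), matching the inner bound $\mathrm{LS}_k\subseteq\mathrm{ARED}_{n,k}$ of \eqref{eq.LS}. Be aware that the paper's own proof founders on exactly the point you flag: the last inequality in the chain \eqref{eq:bound-ARED-1st-unbalanced} replaces the weight $x_2$ by $n^{-1}$, which is not a valid lower bound for $x_2$ over $\Delta_n$. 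So your proposal is incomplete in the hard case, and no completion along the proposed lines exists; the honest outcome of your analysis is a counterexample, not a proof.
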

\begin{proof}
Again, the second point follows from Lemma \ref{lem:minimal-rank-ARED}. Let us prove the first statement.
To this end, we need to check that for all vectors $x \in \Delta_n$, $\langle \lambda^\downarrow_{\rho_k}, \hat x^\uparrow \rangle \geq 0$. If $\operatorname{rk} x = 1$, the previous inequality is satisfied, since all the components of $\hat x$ are non-negative. Consider now a vector $x$ with $r:=\operatorname{rk} x \geq 2$, and let $\varepsilon \in (0,1)$ such that $c > n-2+\varepsilon$. Define $t_k := \min(s_k, (n-1)k)$. For $k$ large enough, $t_k\geq (n-2+\varepsilon)k$. Then, we have
\begin{align}
\nonumber \langle \lambda^\downarrow_{\rho_k}, \hat x^\uparrow \rangle &\geq \lambda^\downarrow_1 \eta_r + x_2(\lambda^\downarrow_{(n-2)k+2} + \nonumber \lambda^\downarrow_{(n-2)k+3} + \cdots + \lambda^\downarrow_{t_k}) \\
\nonumber & \geq  \lambda^\downarrow_1 \eta_r + x_2 (\varepsilon k -1)\lambda_{t_k}^\downarrow  \\
\nonumber &\geq  \lambda^\downarrow_1 \eta_r + x_2 (\varepsilon k -1)\lambda_{s_k}^\downarrow  \\
\label{eq:bound-ARED-1st-unbalanced} &\geq  -\lambda^\downarrow_1  + n^{-1} (\varepsilon k -1)\lambda_{s_k}^\downarrow  .
\end{align}
Note that the quantity appearing in the last step above is independent of $x \in \Delta_n$; we show next that this quantity is, almost surely as $k \to \infty$, converging to a positive limit. Define now, in the setting of Proposition \ref{prop:extremal-Wishart-MP}, for $t>0$
$$\tilde a_t = \begin{cases}
a_t, &\qquad \text{ if } t > 1\\
a_{1/t}, &\qquad \text{ if } t < 1\\
\end{cases}$$
to be the left-most positive element of the support of the free Poisson distribution $\pi_t$. Note that we assume $t \neq 1$ and thus $\tilde a_t >0$.

Let us consider first the case $c \neq n$. Using the fact that, almost surely, as $k\to\infty$,  $s_k \lambda^\downarrow_1 \to b_{c/n}$ and $s_k \lambda^\downarrow_{s_k} \to \tilde a_{c/n} > 0$, we conclude that the first negative term in the bound \eqref{eq:bound-ARED-1st-unbalanced} vanishes, while the second term converges almost surely, as $k\to\infty$, to $\varepsilon/(nc) \tilde{a}_{c/n}$.

Let us now treat the case $c=n$; this case requires special treatment because the left edge of the support of $\pi_1$ is 0. Write, as before, for $k$ large enough,
\begin{align*}
\langle \lambda^\downarrow_{\rho_k}, \hat x^\uparrow \rangle &\geq \lambda^\downarrow_1 \eta_r + x_2(\lambda^\downarrow_{(n-2)k+2} + \lambda^\downarrow_{(n-2)k+3} + \cdots + \lambda^\downarrow_{(n-1)k}) \\
& \geq \lambda^\downarrow_1 \eta_r + x_2 \lfloor (k-1)/2 \rfloor \lambda^\downarrow_{\lceil (n-3/2)k \rceil}  \\
& \geq -\lambda^\downarrow_1  + n^{-1} \lfloor (k-1)/2 \rfloor \lambda^\downarrow_{\lceil (n-3/2)k \rceil},
\end{align*}
where $\lceil\cdot\rceil$ denotes the ceiling function and $\lfloor\cdot\rfloor$ denotes the floor function.
As before, we have that $\lambda^\downarrow_1 \to 0$, while $s_k \lambda^\downarrow_{\lceil (n-3/2)k \rceil}$ converges almost surely, as $k \to \infty$, to the $3/(2n)$-th quantile of the free Poisson distribution $\pi_1$, which is positive (see Proposition \ref{prop:extremal-Wishart-MP}). This concludes the proof.
\end{proof}

\subsection{\texorpdfstring{Threshold for $\mathrm{ARED}$ in the second unbalanced case}{Threshold for ARED in the second unbalanced case}}

\begin{theorem}\label{thm:1}
Consider the \emph{second unbalanced} asymptotic regime, where  $n \to \infty$, $k$ is fixed and $s_n\sim cnk$ for some constant $c>0$. Let $(\rho_n)$ be a sequence of random  states, where $\rho_n$ comes  from  the induced ensemble distributed according to the induced measure $\mu_{nk,s_n}$. Almost surely, when  $n\to\infty$ and $s_n\sim cnk$, one has:
\begin{enumerate}
\item If $c > \left(1+\frac{2}{k}+\frac{2}{k}\sqrt{k+1}\right)^2$, then $\rho_n\in \mathrm{ARED}_{n,k};$
\item If $c<\left(1+\frac{2}{k}+\frac{2}{k}\sqrt{k+1}\right)^2$, then $\rho_n \not\in \mathrm{ARED}_{n,k}.$
\end{enumerate}
\end{theorem}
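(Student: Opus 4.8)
The plan is to reduce both directions to the asymptotics of the extreme eigenvalues of a Mar\v cenko--Pastur spectrum, using the sandwich
\[
\{\lambda : \lambda_1^\downarrow \le (k+1)\lambda_{nk}^\downarrow\} \ \subseteq\ \mathrm{ARED}_{n,k}\ \subseteq\ \{\lambda : \operatorname{rk}\lambda \ge (n-2)k+2\}
\]
from Lemmas \ref{lem:minimal-rank-ARED} and \ref{lem:multiple-smallest-lambda}, together with the characterization in Proposition \ref{prop:ared}. Throughout one works with a Wishart matrix $W_n$ of parameters $(nk,s_n)$ and $\rho_n=W_n/\mathrm{Tr}\,W_n$, which has law $\mu_{nk,s_n}$; all membership conditions used below are homogeneous of degree zero, hence may be tested on the unnormalized spectrum of $W_n$, and since $k$ is fixed while $d:=nk\to\infty$ with $s_n\sim cd$, Proposition \ref{prop:extremal-Wishart-MP} applies with ratio $c$ (write $a_c=\max((\sqrt c-1)^2,0)$, $b_c=(\sqrt c+1)^2$). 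The first thing to record is the identity
\[
\Big(1+\tfrac2k+\tfrac2k\sqrt{k+1}\Big)^2=\frac{(1+\sqrt{k+1})^4}{k^2}=:c_0,
\]
and that $c_0$ is precisely the value of $c$ at which $b_c=(k+1)a_c$: for $c>1$ this equation becomes, after taking square roots, $\sqrt c=(1+\sqrt{k+1})^2/k$, and since $c\mapsto b_c/\big((k+1)a_c\big)$ decreases strictly from $+\infty$ to $1/(k+1)$ on $(1,\infty)$, one gets $b_c<(k+1)a_c\iff c>c_0$, while $b_c>(k+1)a_c$ for every $c<c_0$ (trivially when $c\le 1$, where $a_c=0<b_c$; note also $c_0>1$ always).

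For statement (1), assuming $c>c_0>1$, I would apply Proposition \ref{prop:extremal-Wishart-MP}(2) to $W_n$ to obtain, almost surely, $\lambda_1^\downarrow(W_n)/nk\to b_c$ and $\lambda_{nk}^\downarrow(W_n)/nk\to a_c>0$; since $b_c<(k+1)a_c$, almost surely $\lambda_1^\downarrow(W_n)\le(k+1)\lambda_{nk}^\downarrow(W_n)$ for $n$ large, hence the same for the spectrum of $\rho_n$, and Lemma \ref{lem:multiple-smallest-lambda} yields $\rho_n\in\mathrm{ARED}_{n,k}$.

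For statement (2), assuming $c<c_0$, the plan is to exhibit one bad test vector and invoke Proposition \ref{prop:ared}. Take the uniform vector $x=(1/k,\dots,1/k)\in\Delta_k=\Delta_{\min(n,k)}$ (legitimate once $n\ge k$); then $\operatorname{rk}x=k$, the unique non-trivial solution of $\sum_i x_i/(x_i-\eta)=1$ is $\eta=1/k-1$, so $\eta_1=\dots=\eta_{k-1}=1/k$ and $\eta_k=-(1-1/k)$, whence
\[
\hat x=(\underbrace{1/k,\dots,1/k}_{k^2-1},\ \underbrace{0,\dots,0}_{(n-k)k},\ -(1-1/k))\in\mathbb R^{nk}.
\]
Pairing $\hat x^\uparrow$ against the decreasing spectrum of $W_n$ gives
\[
\langle\lambda^\downarrow(W_n),\hat x^\uparrow\rangle=-(1-\tfrac1k)\,\lambda_1^\downarrow(W_n)+\tfrac1k\sum_{j=(n-k)k+2}^{nk}\lambda_j^\downarrow(W_n),
\]
where the sum runs over the $k^2-1=o(nk)$ smallest eigenvalues; dividing by $nk$ and using Proposition \ref{prop:extremal-Wishart-MP}(2) yields the almost-sure limit $-(1-\tfrac1k)b_c+\tfrac{k^2-1}{k}a_c=\tfrac{k-1}{k}\big((k+1)a_c-b_c\big)<0$ since $c<c_0$. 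Hence, almost surely, $\langle\lambda^\downarrow(W_n),\hat x^\uparrow\rangle<0$ for $n$ large; this has the same sign as $\langle\lambda^\downarrow(\rho_n),\hat x^\uparrow\rangle$, so Proposition \ref{prop:ared} gives $\rho_n\notin\mathrm{ARED}_{n,k}$. (When $c<1$ one may alternatively invoke Lemma \ref{lem:minimal-rank-ARED} directly, since then $s_n<(n-2)k+2$ eventually.)

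The Wishart asymptotics and the computation of $\hat x$ for the uniform vector are routine; the one load-bearing point — and where the precise threshold value gets pinned down — is that the upper and lower bounds match, i.e.\ the uniform test vector asymptotically saturates the chain of inequalities proving Lemma \ref{lem:multiple-smallest-lambda} (equality there forces $x$ uniform on its support and forces the bottom block of eigenvalues to be flat, which a Wishart spectrum achieves only in the limit). Concretely, the thing to verify with care is the identity $c_0=(1+\sqrt{k+1})^4/k^2$ and that $b_c=(k+1)a_c$ exactly there; everything else is bookkeeping.
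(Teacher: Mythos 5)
Your proposal is correct and follows essentially the same route as the paper: statement (1) via Lemma \ref{lem:multiple-smallest-lambda} and the almost-sure convergence of the extreme rescaled eigenvalues to $a_c$ and $b_c$, and statement (2) by testing the uniform vector $x=(1/k,\dots,1/k)$ in Proposition \ref{prop:ared} and obtaining the limit $\frac{k-1}{k}\bigl((k+1)a_c-b_c\bigr)<0$. Your algebraic identification $c_0=(1+\sqrt{k+1})^4/k^2$ and the monotonicity argument pinning down the crossing $b_c=(k+1)a_c$ match the paper's computation $k c-2(k+2)\sqrt c+k$.
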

\begin{proof} To prove the first statement we use again Lemma \ref{lem:multiple-smallest-lambda}. Indeed, by Proposition \ref{prop:extremal-Wishart-MP}, it follows that, almost surely, as $n\to\infty$,
$$s_n\left[ (k+1) \lambda^\downarrow_{nk}-\lambda^\downarrow_{1}\right]\to (k+1)a_c-b_c=k (\sqrt{c})^2-2(k+2)\sqrt{c}+k>0,$$
if $\sqrt{c}>1+\frac{2}{k}+\frac{2}{k}\sqrt{k+1}$ and thus $\rho_n\in \mathrm{ARED}_{n,k}$.

In the other case, for $x = (1/k,1/k,\ldots, 1/k) \in \Delta_{k},$ we have
$$\hat x = (\underbrace{1/k, \ldots, 1/k}_{k^2-1 \text{ times}}, \underbrace{0, \ldots, 0}_{(n-k)k \text{ times}}, 1/k-1) \in \mathbb R^{nk}.$$
Thus,
$$\langle \lambda^\downarrow_{\rho_n},\hat x^\uparrow\rangle
=-\left(1-\frac{1}{k}\right)\lambda^\downarrow_1+\frac{1}{k}\left(\lambda^\downarrow_{nk}+\cdots+
\lambda^\downarrow_{(n-k)k+2}\right).$$
Using again Proposition \ref{prop:extremal-Wishart-MP}, it follows that, almost surely, as $n\to\infty$,
$$s_n\langle \lambda^\downarrow_{\rho_n},\hat x^\uparrow\rangle\to \frac{k-1}{k} \left[(k+1)a_c-b_c\right].$$
If $c\leq 1$, then $a_c=0$ and hence the limit above is negative. On the other hand, if $c>1$, then
$(k+1)a_c-b_c=k(\sqrt{c})^2-2(k+2)\sqrt{c}+k$, which is also negative if
$c<\left(1+\frac{2}{k}+\frac{2}{k}\sqrt{k+1}\right)^2$, and the proof is complete.
\end{proof}

\begin{remark}\rm
It is of interest to notice that when  $k=2$, the threshold  for  $\mathrm{ARED}$ in the second unbalanced case is $c=7+4\sqrt{3}$ and it coincides with the one obtained in the same regime for $\mathrm{APPT}$ \cite{cny}.  This result is natural since the two criteria are equivalent when the second subsystem is a qubit  (see \cite{jlnr} and the references therein).
\end{remark}

\section{\texorpdfstring{Thresholds for $\mathrm{LS}_p$}{Thresholds for LSp}}
\label{sec:LSp}

In \cite{jlnr}, the authors introduce for each $p \in [nk]$,  the set of eigenvalue vectors for which the largest eigenvalue is less or equal than the sum of the $p$ smallest:
\begin{equation}
\label{eq:def-LP}
\mathrm{LS}_p:=\{\lambda \in \Delta_{nk} \, : \, \lambda_1^\downarrow \leq \lambda_{nk-p+1}^\downarrow + \lambda_{nk-p+2}^\downarrow + \cdots +\lambda_{nk}^\downarrow\}.
\end{equation}
It is worth to mention that the set $\mathrm{LS}_p$ is of particular interest because it sets bounds for (the more complicated set) $\mathrm{ARED}_{n,k}$.
Indeed,
accordingly to \cite{jlnr}, it holds that
for $n,k\geq2$,
\begin{equation}\label{eq.LS}
\mathrm{LS}_k \subseteq \mathrm{ARED}_{n,k} \subseteq \mathrm{LS}_{2k-1}.
\end{equation}

Note that the set $\mathrm{LS}_p$ depends only on the product $d=nk$, and not on the particular values of $n$ and $k$. We compute now the threshold for the sequence of sets $\{\mathrm{LS}_{p_d}\}_d$ in three different cases: $p_d = p$ is a fixed function, $1 \ll p_d = o(d)$ and $p_d = \lfloor td \rfloor$, for some fixed fraction $t \in (0,1)$.

\begin{theorem}\label{thm:threshold-$LS_p$}
For $d=nk$,  let  $\rho_d$ be a random induced state distributed according to the induced measure $\mu_{d,s}$.  Then, in the asymptotic regime $d\to \infty$ and $s=s_d\sim cd$ for $c>0$, one has:

\begin{itemize}
\item For every fixed integer $p\geq 2$, almost surely, as $d\to\infty$,
\begin{enumerate}
\item if $c > (1 + \frac{2}{\sqrt{p}-1})^2$, then $\rho_d \in \mathrm{LS}_{p}$;
\item if $c < (1 + \frac{2}{\sqrt{p}-1})^2$, then $\rho_d \not\in \mathrm{LS}_{p}$.
\end{enumerate}
\item For every function $p_d$ such that $1 \ll p_d = o(d)$, almost surely, as $d\to\infty$,
\begin{enumerate}
\item if $c > 1$, then $\rho \in \mathrm{LS}_{p_d}$;
\item if $c < 1$, then $\rho \not\in \mathrm{LS}_{p_d}$.
\end{enumerate}
\item For fixed $t \in (0,1)$ and $p_d = \lfloor td \rfloor$, almost surely, as $d\to\infty$,
\begin{enumerate}
\item if $c > 1-t$, then $\rho \in \mathrm{LS}_{\lfloor td \rfloor}$;
\item if $c < 1-t$, then $\rho \not\in \mathrm{LS}_{\lfloor td \rfloor}$.
\end{enumerate}
\end{itemize}
\end{theorem}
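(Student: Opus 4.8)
The approach follows the pattern of Theorems~\ref{thm:RED-balanced} and~\ref{thm:RED-first-unbalanced}. Since the defining inequality of $\mathrm{LS}_p$, namely $\lambda_1^\downarrow \le \lambda_{d-p+1}^\downarrow + \cdots + \lambda_d^\downarrow$, is homogeneous of degree one, it holds for the spectrum of $\rho_d = W/\mathrm{Tr}\,W$ if and only if it holds for the spectrum of the Wishart matrix $W$ of parameters $(d,s)$ itself. I would therefore fix such a $W$, write $\lambda_1 \ge \cdots \ge \lambda_d \ge 0$ for its eigenvalues, $\tilde\lambda_i = d^{-1}\lambda_i$ for the rescaled ones, and $S_p := \lambda_{d-p+1} + \cdots + \lambda_d$, and reduce each item to the asymptotic comparison between $\lambda_1$ and $S_{p_d}$ via Proposition~\ref{prop:extremal-Wishart-MP}.

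\emph{Fixed $p \ge 2$.} Point~(2) of Proposition~\ref{prop:extremal-Wishart-MP}, applied with $j_d = p = o(d)$, gives almost surely $\tilde\lambda_1 \to b_c = (\sqrt c + 1)^2$ and $\tilde\lambda_d, \ldots, \tilde\lambda_{d-p+1} \to a_c$, hence $d^{-1}S_p \to p\,a_c$. Thus $\rho_d \in \mathrm{LS}_p$ for all large $d$ (a.s.) when $b_c < p\,a_c$, and $\rho_d \notin \mathrm{LS}_p$ for all large $d$ when $b_c > p\,a_c$. If $c \le 1$ then $a_c = 0 < b_c$ and the criterion fails, consistently with $(1+\tfrac{2}{\sqrt p-1})^2 > 1$. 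If $c > 1$ then $a_c, b_c > 0$ and, taking square roots, $b_c < p\,a_c$ is equivalent to $\sqrt c\,(\sqrt p - 1) > \sqrt p + 1$, i.e.\ to $\sqrt c > 1 + \tfrac{2}{\sqrt p - 1}$, which is precisely the claimed threshold.

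\emph{Growing $p_d$ (items two and three).} I would split according to the sign of $c-1$, respectively $c-(1-t)$. In the subcritical case ($c < 1$ for item two, $c < 1-t$ for item three), $s = s_d \sim cd$ forces $\operatorname{rk} W \le s_d < d$, so $W$ has at least $d - s_d \sim (1-c)d$ vanishing eigenvalues; since $p_d = o(d)$, respectively $p_d = \lfloor td\rfloor$ with $t < 1-c$, one gets $p_d \le d - s_d$ for all large $d$, whence $S_{p_d} = 0$ while $\lambda_1 > 0$ almost surely, so $\rho_d \notin \mathrm{LS}_{p_d}$. In the supercritical case I would bound $S_{p_d}$ from below by a quantity outgrowing $\lambda_1 \sim d\,b_c$. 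For item two, point~(2) of Proposition~\ref{prop:extremal-Wishart-MP} gives $\tilde\lambda_i \ge a_c - \varepsilon$ for every $i \ge d - p_d + 1$ once $d$ is large, with $a_c = (\sqrt c - 1)^2 > 0$; hence $S_{p_d} \ge d\,p_d\,(a_c - \varepsilon)$, which dominates $\lambda_1$ since $p_d \to \infty$, so $\rho_d \in \mathrm{LS}_{p_d}$. For item three, I would fix an auxiliary fraction $t' \in (\max(0,1-c),\,t)$ (a nonempty interval precisely because $c > 1-t$) and use point~(3) of Proposition~\ref{prop:extremal-Wishart-MP}: roughly $(t-t')d$ of the bottom $\lfloor td\rfloor$ eigenvalues have index at most $\lfloor(1-t')d\rfloor$ and hence each exceed $\lambda_{\lfloor(1-t')d\rfloor} = d\,\tilde\lambda_{\lfloor(1-t')d\rfloor}$, where $\tilde\lambda_{\lfloor(1-t')d\rfloor} \to q_{t'}$, the $t'$-quantile of $\pi_c$, which is strictly positive exactly because $t' > 1-c$; therefore $S_{\lfloor td\rfloor}$ is eventually of order $d^2$ and dominates $\lambda_1$, so $\rho_d \in \mathrm{LS}_{\lfloor td\rfloor}$.

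The only genuinely delicate points, as I see it, are bookkeeping ones: checking that the $p_d$ smallest eigenvalues really are identically zero in the subcritical regimes (this uses only the elementary bound $\operatorname{rk} GG^* \le s$, no random matrix input) and choosing $t'$ strictly above $1-c$ so that the relevant quantile of the Mar\v{c}enko--Pastur law $\pi_c$ is positive. Everything else follows directly from the limits already recorded in Proposition~\ref{prop:extremal-Wishart-MP}.
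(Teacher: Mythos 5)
Your proposal is correct and follows essentially the same route as the paper: reduce to Wishart eigenvalues by scale invariance, apply Proposition \ref{prop:extremal-Wishart-MP} for the edge and quantile limits in the supercritical cases, and use the rank bound $\operatorname{rk} W \leq s_d < d$ to force the bottom eigenvalues to vanish in the subcritical cases. The only cosmetic difference is that in the third item you treat the supercritical case uniformly via the auxiliary fraction $t'$ and a single quantile argument, whereas the paper splits it into the sub-cases $c>1$ and $1 \geq c > 1-t$.
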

\begin{proof}
We start with the case when $p$ is fixed. For every fixed  $p\geq 2$, the inequality from \eqref{eq:def-LP} becomes asymptotically 
$b_c \leq p a_c,$
where the constants $a_c$ and  $b_c$ are defined in Proposition \ref{prop:extremal-Wishart-MP}.
In the case when $c\leq1$, $a_c=0$, so the inequality above cannot be satisfied.
In the other case, when  $c >1$, the inequality is equivalent to $c\geq (1 + \frac{2}{\sqrt{p}-1})^2$, and the conclusion of the theorem follows.

We move now to the second case where $p_d \to \infty$ as $d\to\infty$, with $p_d = o(d)$.  If $c>1$ then, almost surely, by Proposition \ref{prop:extremal-Wishart-MP},
\begin{align*}
s_d \lambda_1^\downarrow &\to b_c \\
s_d \lambda_d^\downarrow, \ldots, s_d \lambda_{d-p_d+1}^\downarrow &\to a_c >0.
\end{align*}
Using $p_d \to \infty$, we obtain $s_d (\lambda_d^\downarrow +  \cdots + \lambda_{d-p_d+1}^\downarrow) \to \infty $, finishing the proof of the first point. On the other hand, if $c<1$, the limiting measure $\pi_c$ has an atom at zero of mass $1-c$. Since the number of eigenvalues appearing on the right hand side of the inequality defining $\mathrm{LS}_{p_d}$ is $p_d = o(d)$, these eigenvalues will be zero (for $d$ large enough), while $s_d \lambda_1^\downarrow \to b_c >0$, proving the claim.

Let us now consider the final case, where the parameter $p_d$ behaves like $td$, as $d \to \infty$ ($t \in (0,1)$ is fixed). Assume first that $c > 1-t$. Notice that, since $\lambda_1^\downarrow$ converges to zero, it is enough to show that $\lambda_d^\downarrow + \cdots + \lambda_{d - \lfloor td \rfloor +1}^\downarrow$ converges to some positive limit. To show this fact, let us consider two sub-cases. First, if $c>1>1-t$, then, as $d \to \infty$,
$$\lambda_d^\downarrow + \cdots + \lambda_{d - \lfloor td \rfloor +1}^\downarrow   \geq \lfloor td \rfloor \lambda_d^\downarrow \to (t/c) a_c >0.$$
Otherwise, we have $1 \geq c > 1-t$. Take $\varepsilon>0$, small enough such that $c-3\varepsilon > 1-t$.
Then, we have, for $d$ large enough,
\begin{align*}
\lambda_d^\downarrow + \cdots + \lambda_{d - \lfloor td \rfloor +1}^\downarrow &=   \lambda_{s_d}^\downarrow + \cdots + \lambda_{d - \lfloor td \rfloor +1}^\downarrow \\
&\geq  \lambda_{\lfloor (c-\varepsilon)d \rfloor}^\downarrow + \cdots + \lambda_{d - \lfloor td \rfloor +1}^\downarrow \\
&\geq \lfloor \varepsilon d \rfloor \lambda_{\lfloor (c-\varepsilon)d \rfloor}^\downarrow.
\end{align*}
The right hand side of the above expression converges almost surely, as $d\to\infty$, to a positive constant ($\varepsilon/c$ times the $(1-c+\varepsilon)$-th quantile of $\pi_c$, which is positive), proving the first point.

We consider now the second point, when $c<1-t<1$. Since, almost surely, $\mathrm{rk} \rho_d \leq s_d \sim cd$, we have $ \lambda_d^\downarrow = \cdots = \lambda_{d - \lfloor td \rfloor +1}^\downarrow = 0$, while $s_d\lambda_1^\downarrow \to b_c >0$, showing that $\rho_d \notin \mathrm{LS}_{\lfloor td \rfloor}.$
\end{proof}

\section{\texorpdfstring{Thresholds for $\mathrm{GER}$}{Thresholds for GER}}
\label{sec:GER}

In \cite{jlnr}, the following set of probability vectors (we put $r=\min(n,k)$) was introduced:
\begin{equation}\label{eq:def-GER}
\mathrm{GER}_{n,k} = \left\{ \lambda \in \Delta_{nk} \, : \, \sum_{i=1}^{r-1} \lambda^\downarrow_i \leq 2 \lambda^\downarrow_{nk} + \sum_{i=1}^{r-1} \lambda^\downarrow_{nk-i} \right\},
\end{equation}
in connection to Hildebrand's characterization of $\mathrm{APPT}$ states. Indeed, it was show in \cite[Theorem 7.2]{jlnr} that $\mathrm{GER}_{n,k} \subseteq \mathrm{APPT}_{n,k}$; the proof consists in applying Gershgorin's circle theorem to show that Hildebrand's conditions from \cite{hil} are satisfied.

We compute next the thresholds for the set $\mathrm{GER}_{n,k}$. Note that the definition of the set is symmetric in $n$ and $k$, so we shall assume, without loss of generality, that $n \geq k$.

\begin{theorem}\label{thm:threshold-GER-balanced}
Consider the \emph{balanced} asymptotic regime, where  $n \to \infty$ and $k_n \to \infty$ as $n\to\infty$. Let $\rho_n$ be a random induced state distributed according to the induced measure $\mu_{nk_n,s_n}$. Almost surely, as $n \to\infty$ and $s_n\sim cnk_n^3$ for some constant $c>0$, one has:
\begin{enumerate}
\item[(i)]  If $c > 4$, then $\rho_n \in \mathrm{GER}_{n,k_n}$;
\item[(ii)] If $c < 4$, then $\rho_n \not\in \mathrm{GER}_{n,k_n}$.
\end{enumerate}
\end{theorem}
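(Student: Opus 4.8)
The plan is to reduce membership of $\rho_n$ in $\mathrm{GER}_{n,k_n}$ to the behaviour of the extreme eigenvalues of a large Wishart matrix in the regime $1\ll d\ll s$, which is controlled by Proposition \ref{prop:Wishart-centered-normalized}. First I would observe that the inequality defining $\mathrm{GER}_{n,k}$ in \eqref{eq:def-GER} is homogeneous of degree one in the spectrum (both sides are sums of eigenvalues), hence scale invariant; so if $W_n$ is a Wishart matrix of parameters $(nk_n,s_n)$, then $\rho_n=W_n/\mathrm{Tr}W_n$ lies in $\mathrm{GER}_{n,k_n}$ if and only if the \emph{unnormalised} eigenvalues of $W_n$ satisfy the same inequality; in particular one never needs to deal with $\mathrm{Tr}W_n$. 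Writing $d=d_n=nk_n$, and using the symmetry of $\mathrm{GER}$ in its two indices to assume $n\geq k_n$, so that $r=\min(n,k_n)=k_n$, we have $d_n\to\infty$ and $s_n/d_n\sim ck_n^2\to\infty$ (since $k_n\to\infty$), placing us in the hypotheses of Proposition \ref{prop:Wishart-centered-normalized}.

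Next I would introduce $Z_d$ as in that proposition, so that $W_n=s_nI_d+\sqrt{ds_n}\,Z_d$ and hence $\lambda_i^\downarrow(W_n)=s_n+\sqrt{ds_n}\,\lambda_i^\downarrow(Z_d)$ for every $i$ (the two matrices commute). Substituting this into the defining inequality of $\mathrm{GER}_{n,k_n}$, the $(k_n-1)s_n$ on the left cancels the matching part of the $(k_n+1)s_n$ on the right, leaving a residual $2s_n$ there; dividing through by $\sqrt{ds_n}$, the membership $\rho_n\in\mathrm{GER}_{n,k_n}$ becomes equivalent to
\[
\sum_{i=1}^{k_n-1}\lambda_i^\downarrow(Z_d)-2\lambda_d^\downarrow(Z_d)-\sum_{i=1}^{k_n-1}\lambda_{d-i}^\downarrow(Z_d)\leq 2\sqrt{s_n/d}.
\]
The right-hand side is deterministic, and since $s_n/d\sim ck_n^2$ it equals $(2\sqrt{c}+o(1))\,k_n$.

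The heart of the matter is to show that, almost surely, the left-hand side above is $(4+o(1))\,k_n$. Here I would invoke the second part of Proposition \ref{prop:Wishart-centered-normalized} with $j_d=k_n$, which is admissible because $n\to\infty$ forces $k_n=o(nk_n)=o(d)$: almost surely, the $k_n$ largest eigenvalues of $Z_d$ converge to $2$ and the $k_n$ smallest converge to $-2$, and by monotonicity this convergence is uniform in the index. Consequently $k_n^{-1}\sum_{i=1}^{k_n-1}\lambda_i^\downarrow(Z_d)\to 2$ and $k_n^{-1}\sum_{i=1}^{k_n-1}\lambda_{d-i}^\downarrow(Z_d)\to-2$, while the single term $-2\lambda_d^\downarrow(Z_d)\to 4$ contributes only $O(1)=o(k_n)$; adding the three pieces shows that the left-hand side divided by $k_n$ converges almost surely to $2+2=4$. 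I expect this passage — from convergence of the individual extreme eigenvalues to convergence of the average of $k_n\to\infty$ of them — to be the only genuinely delicate step; it is exactly what the ``$o(d)$ eigenvalues'' refinement in Proposition \ref{prop:Wishart-centered-normalized} provides, combined with the squeeze $\lambda_{k_n}^\downarrow(Z_d)\leq k_n^{-1}\sum_{i=1}^{k_n}\lambda_i^\downarrow(Z_d)\leq\lambda_1^\downarrow(Z_d)$, whose two ends both tend to $2$.

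Finally I would conclude by comparing the two sides: dividing the displayed inequality by $k_n$ and letting $n\to\infty$, it reduces almost surely to $4\leq 2\sqrt{c}$. Thus if $c>4$ the inequality holds for all large $n$, giving $\rho_n\in\mathrm{GER}_{n,k_n}$ and hence (i); and if $c<4$ it fails for all large $n$, giving $\rho_n\notin\mathrm{GER}_{n,k_n}$ and hence (ii). This pins down the threshold for $\mathrm{GER}$ in the balanced regime at $c=4$, on the scale $s_n\sim cnk_n^3$.
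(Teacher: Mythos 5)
Your proposal is correct and takes essentially the same route as the paper: both pass to the Wishart model, invoke Proposition \ref{prop:Wishart-centered-normalized} with $j=k_n=o(nk_n)$, and reduce membership in $\mathrm{GER}_{n,k_n}$ to the asymptotic comparison $4\le 2\sqrt{c}$. The only cosmetic difference is that you evaluate the limit of the averaged sums of the $k_n-1$ extreme eigenvalues directly via the squeeze argument, whereas the paper first bounds each sum by $(k_n\mp1)$ times a single extreme eigenvalue; both rest on the same random-matrix input.
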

\begin{proof}
We consider the eigenvalues of the corresponding Wishart matrix with parameters $(nk_n, cnk_n^3)$. Note that the ratio of parameters is $ck_n^2 \to \infty$ as $n\to\infty$, so we can apply Proposition \ref{prop:Wishart-centered-normalized}. We shall consider the rescaled eigenvalues
$$\tilde \lambda_i = \lambda_i^\downarrow\left( \frac{W_n}{cnk_n^3} \right).$$
From Proposition \ref{prop:Wishart-dirac} we know that, for all $i$, $\tilde \lambda_i \to 1$ in probability, while Proposition \ref{prop:Wishart-centered-normalized} implies that, for any function $j_n = o(nk_n)$, almost surely as $n \to \infty$,
\begin{align*}
\forall \, i=1,2,\ldots, j_n , \qquad &\sqrt c k_n(\tilde \lambda_i -1) \to 2\\
\forall \, i=1,2,\ldots, j_n , \qquad &\sqrt c k_n(\tilde \lambda_{nk_n+1-i} -1) \to -2.
\end{align*}

Let us first assume $c>4$. The inequality \eqref{eq:def-GER} for the $\tilde \lambda_i$ is implied by the following equivalent inequalities (recall that $\min(n,k_n)=k_n$):
\begin{align*}
(k_n-1) \tilde \lambda_1 &\leq (k_n+1) \tilde \lambda_{nk_n} \\
k_n(\tilde \lambda_1 - \tilde \lambda_{nk_n}) &\leq \tilde\lambda_1 + \tilde\lambda_{nk_n}.
\end{align*}
The left-hand side of the inequality above converges, almost surely, as $n \to \infty$, to $4/\sqrt c$, while the right-hand side converges to $2$; since $c>4$, the conclusion follows.

In the case $c<4$, we write
\begin{align*}
\tilde \lambda_1 + \cdots  + \tilde \lambda_{k_n-1} &\geq (k_n-1) \tilde \lambda_{k_n-1} \quad \text{ and } \\
2 \tilde \lambda_{nk_n} + \tilde \lambda_{nk_n-1} + \cdots + \tilde \lambda_{nk_n-k_n+1} &\leq (k_n+1)  \tilde \lambda_{nk_n-k_n+1}.
\end{align*}
Using Proposition \ref{prop:Wishart-centered-normalized} with $j_n = k_n = o(nk_n)$, we can conclude as before.
\end{proof}

\begin{theorem}\label{thm:threshold-GER-unbalanced}
Consider the \emph{unbalanced} asymptotic regime, where $k$ is fixed and $n \to \infty$. Let $\rho_n$ be a random induced state distributed according to the induced measure $\mu_{nk,s_n}$. Almost surely, as $n \to\infty$ and $s_n\sim cnk$ for $c>0$, one has:
\begin{enumerate}
\item[(i)] If $c > (k + \sqrt{k^2-1})^2$, then $\rho_n \in \mathrm{GER}_{n,k}$;
\item[(ii)] If $c < (k + \sqrt{k^2-1})^2$, then $\rho_n \not\in \mathrm{GER}_{n,k}$.
\end{enumerate}
\end{theorem}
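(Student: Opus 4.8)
\section*{Proof proposal for Theorem~\ref{thm:threshold-GER-unbalanced}}

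The plan is to run the same scheme as in Theorem~\ref{thm:1} and Theorem~\ref{thm:threshold-GER-balanced}, using the fact that the defining inequality of $\mathrm{GER}_{n,k}$ is homogeneous of degree one in $\lambda$. Realize $\rho_n = W_n/\mathrm{Tr}\,W_n$ with $W_n$ a Wishart matrix of parameters $(nk,s_n)$; then $\rho_n \in \mathrm{GER}_{n,k}$ if and only if the decreasingly ordered eigenvalues $\lambda^\downarrow_1 \geq \cdots \geq \lambda^\downarrow_{nk}$ of $W_n$ satisfy
$$\sum_{i=1}^{k-1}\lambda^\downarrow_i \;\leq\; 2\lambda^\downarrow_{nk} + \sum_{i=1}^{k-1}\lambda^\downarrow_{nk-i}$$
(here $r=\min(n,k)=k$ for $n$ large). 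Since $s_n\sim cnk$ we are in the regime $nk\to\infty$, $s_n/(nk)\to c$, so Proposition~\ref{prop:extremal-Wishart-MP} applies. As $k$ is fixed, the indices $1,\dots,k-1$ all lie among the top $o(nk)$ eigenvalues and $nk-k+1,\dots,nk$ among the bottom $o(nk)$ eigenvalues; hence part (2) of that proposition gives, almost surely as $n\to\infty$,
$$(nk)^{-1}\lambda^\downarrow_i \to b_c \quad (1\leq i\leq k-1), \qquad (nk)^{-1}\lambda^\downarrow_{nk-i} \to a_c \quad (0\leq i\leq k-1).$$

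Next I would rescale the difference of the two sides of the inequality by $(nk)^{-1}$ and pass to the limit, obtaining almost surely
$$(nk)^{-1}\Bigl(2\lambda^\downarrow_{nk} + \sum_{i=1}^{k-1}\lambda^\downarrow_{nk-i} - \sum_{i=1}^{k-1}\lambda^\downarrow_i\Bigr) \;\longrightarrow\; L(c):=(k+1)a_c-(k-1)b_c.$$
It then remains to determine the sign of $L(c)$, a routine computation. When $c\leq 1$ one has $a_c=0$ and $b_c=(\sqrt c+1)^2>0$, so $L(c)<0$. When $c>1$, setting $u=\sqrt c$,
$$L(c)=(k+1)(u-1)^2-(k-1)(u+1)^2 = 2\bigl(u^2-2ku+1\bigr),$$
which is strictly positive precisely when $u>k+\sqrt{k^2-1}$, the other root $k-\sqrt{k^2-1}$ being smaller than $1$. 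Since $(k+\sqrt{k^2-1})^2>1$, this single dichotomy covers both hypotheses of the theorem: if $c>(k+\sqrt{k^2-1})^2$ the almost sure limit $L(c)$ is strictly positive, so for $n$ large the Wishart eigenvalues satisfy the $\mathrm{GER}$ inequality and $\rho_n\in\mathrm{GER}_{n,k}$; if $c<(k+\sqrt{k^2-1})^2$ then $L(c)<0$ and $\rho_n\notin\mathrm{GER}_{n,k}$ for $n$ large.

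I do not expect a genuine obstacle here beyond careful bookkeeping of which eigenvalues converge to which edge of the Mar\v cenko--Pastur support. The only point needing a little care is the regime $c\leq 1$, where $a_c$ is the atom location $0$ rather than $(\sqrt c-1)^2$, so the closed form for $L(c)$ is invalid and one argues the sign directly (alternatively: for $c<1$ one has $\mathrm{rk}\,\rho_n\leq s_n<nk-k$ eventually, so the whole right-hand side vanishes while the left-hand side stays positive); this case only makes the conclusion easier. As usual, the threshold value $c=(k+\sqrt{k^2-1})^2$ itself is left undetermined.
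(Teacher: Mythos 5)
Your proposal is correct and follows essentially the same route as the paper: pass to the Wishart model via scale invariance of the $\mathrm{GER}$ inequality, apply the edge-eigenvalue convergence of Proposition~\ref{prop:extremal-Wishart-MP} to the $O(1)$ extremal eigenvalues involved, and reduce the dichotomy to the sign of $(k+1)a_c-(k-1)b_c$, which you compute correctly (including the separate, easier treatment of $c\leq 1$ where $a_c=0$). No gaps; your write-up simply makes explicit the algebra the paper leaves to the reader.
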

\begin{proof}
As usual, since the definition of the set $\mathrm{GER}$ is scale-invariant, we shall consider a Wishart matrix of parameters $(nk, cnk)$. The inequality from \eqref{eq:def-GER} reads now asymptotically (recall that $k$ is fixed)
$$(k-1)b_c \leq (k+1) a_c,$$
where the constants $a_c, b_c$ are defined in Proposition \ref{prop:extremal-Wishart-MP}. If $c\leq1$, $a_c=0$, so the inequality above cannot be satisfied; we assume thus $c > 1$. The inequality is then easily seen to be equivalent to $c \geq (k+\sqrt{k^2-1})^2$, and the proof is complete.
\end{proof}

\begin{remark}\rm
The thresholds for $\mathrm{GER}$ are the same as the thresholds for $\mathrm{APPT}$ given in \cite[Theorem 4.1]{cny} and \cite[Theorem 4.2]{cny}.
This result strengthens the claim that $\mathrm{GER}$ is a very good approximation to $\mathrm{APPT}$.
Similar results to support this claim are \cite[Remark 7.4]{jlnr} and \cite[Proposition 8.2]{jlnr}.
\end{remark}

\section{\texorpdfstring{Threshold for $\mathrm{SEPBALL}$}{Threshold for SEPBALL}}
\label{sec:SEPBALL}

One of the earlier results about the geometry of the set of separable states is a very surprising one: the largest Euclidean ball, centered at $I_d/d$, which is contained in the set of $d$-dimensional quantum states, contains only separable states \cite{gba} (to make a sense of separability, we consider an arbitrary decomposition $\mathbb C^d = \mathbb C^n \otimes \mathbb C^k$). To be more precise, the set
$$\mathrm{SEPBALL}_{n,k} =\left\{\rho\in D_{n,k} \, | \, \mathrm{Tr}(\rho^2)\leq\frac{1}{nk-1}\right\}$$
is a subset of the set of separable states. Moreover, since the set $\mathrm{SEPBALL}$ is defined in terms of the trace of the square of the density matrix, it is invariant under global unitary conjugations, so we have that $\mathrm{SEPBALL}_{n,k} \subseteq \mathrm{ASEP}_{n,k}$.

It turns out that the set $\mathrm{SEPBALL}$ is much smaller than the other sets studied in this work. In \cite[Proposition 8.2]{jlnr} it has been shown that the largest eigenvalue of elements in $\mathrm{SEPBALL}_{n,k}$ is smaller than the corresponding quantity for other sets, such as $\mathrm{APPT}_{n,k}$, $\mathrm{ARED}_{n,k}$, or $\mathrm{GER}_{n,k}$. The behavior of thresholds is also different for $\mathrm{SEPBALL}_{n,k}$ than for the other sets: the size of the ``environment'' $s_d$ scales like the \emph{square} of the total size of the system $d=nk$. In the result below, since $\mathrm{SEPBALL}_{n,k}$ depends only on the product $d=nk$, we simply write $\mathrm{SEPBALL}_d=\mathrm{SEPBALL}_{n,k}$.

\begin{theorem}\label{thm:threshold-SEPBALL}
Consider the asymptotic regime, where the total dimension $d=nk$ of the system grows (the way $n$ and $k$ grow is not relevant). Let $\rho_d$ be a random induced state distributed according to the induced measure $\mu_{d,s_d}$. Almost surely, as $d \to\infty$ and $s_d\sim cd^2$ for $c>0$, one has:
\begin{enumerate}
\item[(i)] If $c > 1$, then $\rho_d \in \mathrm{SEPBALL}_d$;
\item[(ii)] If $c < 1$, then $\rho_d \not\in \mathrm{SEPBALL}_d$.
\end{enumerate}
\end{theorem}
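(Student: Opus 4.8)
The plan is to work, as in the previous proofs, with the Wishart ensemble rather than the induced measure, using the scale-invariant characterization of $\mathrm{SEPBALL}_d$. Writing $\rho_d = W_d / \mathrm{Tr}(W_d)$ for a Wishart matrix $W_d$ of parameters $(d, s_d)$, the condition $\mathrm{Tr}(\rho_d^2) \leq 1/(d-1)$ becomes $(d-1)\,\mathrm{Tr}(W_d^2) \leq (\mathrm{Tr} W_d)^2$. The strategy is to control both sides of this inequality asymptotically: by Proposition \ref{prop.trace}, almost surely $\mathrm{Tr}(W_d) \sim ds_d$, so $(\mathrm{Tr} W_d)^2 \sim d^2 s_d^2$; the core of the argument is to obtain a sufficiently precise almost-sure estimate for $\mathrm{Tr}(W_d^2) = \sum_{i=1}^d \lambda_i^2$ in the regime $s_d \sim c d^2$.

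First I would compute the relevant moments. In the regime $d \ll s_d$, the Wishart matrix $W_d/s_d$ concentrates around $I_d$, so at leading order $\mathrm{Tr}(W_d^2) \approx d s_d^2$; but this leading term only gives $(d-1) d s_d^2$ versus $d^2 s_d^2$, which is a tie, so one needs the next-order correction. The cleanest route is to use the exact formula $\mathbb{E}\,\mathrm{Tr}(W^2) = d s(d + s)$ for a Wishart matrix of parameters $(d,s)$ (which follows from a direct Gaussian computation, or equivalently from the first two moments of the Marchenko--Pastur law with parameter $d/s$). Thus $\mathbb{E}\,\mathrm{Tr}(W_d^2) = d s_d (d + s_d) = d s_d^2 (1 + d/s_d) \sim d s_d^2 (1 + 1/(cd))$, so that $(d-1)\,\mathbb{E}\,\mathrm{Tr}(W_d^2) \sim d s_d^2 (d - 1)(1 + 1/(cd)) = d s_d^2 (d - 1 + 1/c - \ldots)$, whereas $\mathbb{E}(\mathrm{Tr} W_d)^2 \sim d^2 s_d^2$. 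Comparing, the inequality $(d-1)\mathrm{Tr}(W_d^2) \leq (\mathrm{Tr} W_d)^2$ holds asymptotically precisely when $1/c < 1$, i.e.\ $c > 1$; when $c < 1$ it fails. This pins down the threshold at $c = 1$.

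Next I would upgrade the expectation estimates to almost-sure statements. For $\mathrm{Tr} W_d$ this is exactly Proposition \ref{prop.trace} (exponential concentration, hence Borel--Cantelli gives almost-sure convergence of $\mathrm{Tr}(W_d)/(ds_d)$ to $1$). For $\mathrm{Tr}(W_d^2)$ one needs a concentration inequality: $\mathrm{Tr}(W_d^2) = \|G^*G\|_{\mathrm{HS}}^2 = \|GG^*\|_{\mathrm{HS}}^2$ is a (degree-four) polynomial in the Gaussian entries of $G$, and standard Gaussian concentration — e.g.\ via the fact that $G \mapsto \|GG^*\|_{\mathrm{HS}}$ is Lipschitz on the ball of operators of moderate norm, combined with the operator-norm bound $\|W_d\| \leq C d s_d$ that holds with overwhelming probability by Proposition \ref{prop:Wishart-dirac} — yields $\mathbb{P}[|\mathrm{Tr}(W_d^2) - \mathbb{E}\,\mathrm{Tr}(W_d^2)| > \varepsilon\, d s_d^2] \leq C' \exp(-c' d^{\alpha})$ for some $\alpha > 0$. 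Borel--Cantelli then gives $\mathrm{Tr}(W_d^2)/(d s_d^2) \to 1 + 1/(cd) \cdot (1 + o(1))$ almost surely — more precisely, it gives enough to separate the two sides of the inequality by a term of order $d s_d^2$, which dominates the fluctuations. Assembling these, in case (i) the ratio $(d-1)\mathrm{Tr}(W_d^2)/(\mathrm{Tr} W_d)^2 \to 1/c < 1$ almost surely, so $\rho_d \in \mathrm{SEPBALL}_d$ eventually; in case (ii) the ratio tends to $1/c > 1$, so $\rho_d \notin \mathrm{SEPBALL}_d$.

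The main obstacle is the concentration of the quartic quantity $\mathrm{Tr}(W_d^2)$ at the required precision: since the relevant gap between $(d-1)\mathbb{E}\,\mathrm{Tr}(W_d^2)$ and $\mathbb{E}(\mathrm{Tr} W_d)^2$ is only of relative order $|1 - 1/c|/d$ compared to the $d^2 s_d^2$ scale — that is, of absolute order $d s_d^2$ — one must verify that the almost-sure fluctuations of $\mathrm{Tr}(W_d^2)$ are genuinely $o(d s_d^2)$ and not merely $o(d^2 s_d^2)$. Because $s_d \sim c d^2$ is large, the matrix is very well-conditioned and a Gaussian-Lipschitz / Hanson--Wright-type bound comfortably delivers fluctuations of order $\sqrt{d}\, s_d^{3/2}$ or so after truncating the operator norm, which is indeed $o(d s_d^2)$; still, this truncation-plus-concentration step is the one place where a genuine (if routine) random-matrix estimate is needed rather than a soft appeal to the propositions already stated. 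Everything else is bookkeeping with the two explicit moments and Borel--Cantelli.
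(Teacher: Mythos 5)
Your reduction to the Wishart inequality $(d-1)\,\mathrm{Tr}(W_d^2)\leq(\mathrm{Tr}\,W_d)^2$ and the identification of the threshold from the second-order term are exactly what the paper does; the only divergence is in how you control $\mathrm{Tr}(W_d^2)$ almost surely. The paper writes $W_d=s_dI_d+\sqrt{ds_d}\,Z_d$ and simply invokes Proposition \ref{prop:Wishart-centered-normalized}, which already supplies the almost-sure convergences $d^{-1}\mathrm{Tr}\,Z_d\to0$ and $d^{-1}\mathrm{Tr}(Z_d^2)\to1$; after substitution the inequality reduces to $cd^5\leq c^2d^5$ at leading order, with no new concentration estimate needed. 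You instead use the exact moment $\mathbb{E}\,\mathrm{Tr}(W^2)=ds(d+s)$ --- which encodes the same correction term $d^2s$ beyond the leading $ds^2$, namely the paper's $ds_d\,\mathrm{Tr}(Z_d^2)\approx d^2s_d$ --- and then supply a truncation-plus-Gaussian-concentration argument to show that the fluctuations of $\mathrm{Tr}(W_d^2)$ are $o(ds_d^2)$. That argument is correct: the map $G\mapsto\|GG^*\|_{HS}$ has Lipschitz constant $O(\sqrt{s_d})$ on the high-probability event where $\|G\|$ is controlled, so $\mathrm{Tr}(W_d^2)$ fluctuates by $O(\sqrt{d}\,s_d^{3/2})=o(ds_d^2)$ with exponentially small failure probability, and Borel--Cantelli applies; the comparison of the two sides then lands on $1/c$ versus $1$ in both cases, as you say. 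But this concentration work is avoidable, since the quantity your moment computation isolates is precisely the one whose almost-sure behaviour is already packaged in Proposition \ref{prop:Wishart-centered-normalized}. Your route has the mild advantage of being self-contained at the level of explicit Gaussian moments; the paper's is shorter because the semicircular fluctuation result does the concentration for you.
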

\begin{proof}
Replacing $\rho_d$ by $W_d / \mathrm{Tr} W_d$, where $W_d$ is a Wishart matrix of parameters $(d,s_d)$, the inequality in the definition of the set $\mathrm{SEPBALL}_d$ reads
$$(d-1)\mathrm{Tr}(W_d^2) \leq (\mathrm{Tr} W_d )^2.$$
In our setting, $d=o(s_d)$, so we can apply Proposition \ref{prop:Wishart-centered-normalized}. Replacing $W_d$ by $s_d I_d + \sqrt{ds_d}Z_d$, the above inequality simplifies to
$$ds_s (d-1) \mathrm{Tr}(Z_d^2) \leq ds_d^2 + 2 s_d \sqrt{ds_d} \mathrm{Tr} Z_d+d s_d (\mathrm{Tr} Z_d)^2.$$
Since $Z_d$ converges almost surely to a semicircular distribution, we have that $d^{-1} \mathrm{Tr} Z_d \to 0$ and  $d^{-1} \mathrm{Tr}(Z_d^2) \to 1$. Hence, the previous inequality becomes, after replacing $s_d$ by $cd^2$ and keeping only the dominating terms in $d \to \infty$,
$$cd^5 \leq c^2 d^5,$$
finishing the proof.
\end{proof}

\section{Conclusions and open questions}
\label{sec:conclusions}

In this final section we gather results about the thresholds for different entanglement criteria considered in the literature and also for some related sets in the balanced and unbalanced asymptotic regimes.

\begin{table}
\begin{center}
\begin{tabular}{cc|c|c|c|}
\cline{2-4}
&\multicolumn{1}{|c|}{\multirow{1}{*}{Balanced regime}} &\multicolumn{2}{|c|}{Unbalanced regime}\\
&\multicolumn{1}{|c|}{\multirow{1}{*}{$n,k\to\infty$}}
&\multicolumn{2}{|c|}{\multirow{1}{*}{$m=\min (n,k)$ fixed, $\max (n,k)\to\infty$}}\\
\cline{1-4}
\multicolumn{1}{ |c  }{$\mathrm{SEP}$}
&\multicolumn{1}{ |c| }{$n^3\lesssim s\lesssim n^3\log^2 n$ \cite[n=k]{asy} }
&\multicolumn{2}{ |c| }{$mnk\lesssim s\lesssim mnk\log^2 (nk)$ \cite{asy}} \\
\cline{1-4}
\multicolumn{1}{|c}{\multirow{2}{*}{$\mathrm{PPT}$ } }
&\multicolumn{1}{|c|}{$s\sim cnk$}
&\multicolumn{2}{|c|}{$s\sim cnk$}  \\
\cline{2-4}
\multicolumn{1}{ |c  }{}
&\multicolumn{1}{ |c| }{$c=4$ \cite[$n=k$]{aub}}
&\multicolumn{2}{c|}{$c=2+2\sqrt{1-\frac{1}{m^2}}$  \cite{bne}} \\
\cline{1-4}
\multicolumn{1}{ |c  }{\multirow{2}{*}{$\mathrm{RLN}$} }
&\multicolumn{1}{ |c| }{$s\sim cnk$}
&\multicolumn{2}{|c|}{$s$ fixed} \\
\cline{2-4}
\multicolumn{1}{ |c  }{}
&\multicolumn{1}{ |c| }{$c=(8/3\pi)^2$ \cite[$n=k$]{ane}}
&\multicolumn{2}{|c|}{$s=m^2$ \cite{ane}} \\
\cline{1-4}
\multicolumn{1}{|c}{\multirow{2}{*}{$\mathrm{RED}$} }
&\multicolumn{1}{ |c| }{$s\sim cn$}
&For $m=n$, $s$ is fixed &For $m=k$, $s=cnk$  \\
\cline{2-4}
\multicolumn{1}{ |c  }{}
&\multicolumn{1}{ |c| }{$c=1$ }
&$s=n$
&$c=\frac{(1+\sqrt{k+1})^2}{k(k-1)}$ \cite{jln} \\
\cline{1-4}
\end{tabular}
\end{center}
\caption {Thresholds for separability vs. entanglement and entanglement criteria} \label{tab1}
\end{table}

In Table \ref{tab1} we review the thresholds for separability ($\mathrm{SEP}$) vs. entanglement and also for  some well-known entanglement criteria: positive partial transpose ($\mathrm{PPT}$) criterion, realignment ($\mathrm{RLN}$) criterion and reduction ($\mathrm{RED}$) criterion.
In \cite{ane} the authors showed that, in the balanced case ($n=k\to\infty$), the threshold for the realignment criterion is $c=(8/3\pi)^2\approx 0.72$, on the scale $s\sim cnk$, which means that the realignment criterion is asymptotically weaker than the $\mathrm{PPT}$ criterion (from a volume perspective; from a set-inclusion perspective, the two criteria are not comparable). In the unbalanced regime it is shown that the threshold is $s=m^2$, on the scale of bounded $s$; here, $m = \min(n,k)$. On the other hand, the corresponding one for the $\mathrm{PPT}$ criterion is unbounded with respect to $\max(n,k)$, $s\sim c \max(n,k)m$. For the reduction criterion, the unbalanced case splits into two different cases depending on the parameter which tends to infinity (the dimension of the subsystem to which the reduction map is applied is important, for more comments see \cite{jlnr}). When the dimension $k$ of the second subsystem is larger, then the threshold is $s=n$ ($n$ is the dimension of the first subsystem), on the scale of bounded $s$. In the other case ($n\to\infty$ and $k$ is fixed), the threshold for the reduction criterion is $c=\frac{(1+\sqrt{k+1})^2}{k(k-1)}$, on the scale $s\sim cnk$, which is smaller than the corresponding one for the $\mathrm{PPT}$ criterion, $c=2+2\sqrt{1-\frac{1}{k^2}}$, which follows from the paper of Banica and Nechita  \cite{bne}. Moreover, for $k=2$ the two values are the same. This is natural since the reduction criterion is in general weaker than the $\mathrm{PPT}$ criterion, the two criteria being equivalent for $k=2$. In \cite{lan}, upper bounds for the threshold of $k$-extendible states have been obtained, but we do not discuss these results here, since the $k$-extendibility criterion does not enter the framework of this work.

\begin{table}
\begin{center}
\begin{tabular}{cc|c|c|c|}
\cline{2-4}
&\multicolumn{1}{|c|}{\multirow{1}{*}{Balanced regime}} &\multicolumn{2}{|c|}{Unbalanced regime}\\
&\multicolumn{1}{|c|}{\multirow{1}{*}{$n,k\to\infty$}}
&\multicolumn{2}{|c|}{\multirow{1}{*}{$m=\min (n,k)$ fixed, $\max (n,k)\to\infty$}}\\
\cline{1-4}
\multicolumn{1}{|c}{\multirow{2}{*}{$\mathrm{APPT}$ } }
&\multicolumn{1}{|c|}{$s\sim c\min(n,k)^2nk$}
&\multicolumn{2}{|c|}{$s\sim cnk$}  \\
\cline{2-4}
\multicolumn{1}{ |c  }{}
&\multicolumn{1}{ |c| }{$c=4$ \cite{cny}}
&\multicolumn{2}{c|}{$c=\left(m+\sqrt{m^2-1}\right)^2$  \cite{cny}} \\
\cline{1-4}
\multicolumn{1}{ |c  }{\multirow{2}{*}{$\mathrm{GER}$} }
&\multicolumn{1}{ |c| }{$s\sim c\min(n,k)^2nk$}
&\multicolumn{2}{|c|}{$s\sim cnk$} \\
\cline{2-4}
\multicolumn{1}{ |c  }{}
&\multicolumn{1}{ |c| }{$c=4$}
&\multicolumn{2}{|c|}{$c=\left(m+\sqrt{m^2-1}\right)^2$ } \\
\cline{1-4}
\multicolumn{1}{|c}{\multirow{2}{*}{$\mathrm{ARED}$} }
&\multicolumn{1}{ |c| }{$s\sim cnk$}
&For $m=n$, $s\sim ck$ &For $m=k$, $s=cnk$  \\
\cline{2-4}
\multicolumn{1}{ |c  }{}
&\multicolumn{1}{ |c| }{$c=1$ }
&$c=n-2$
&$c=\left(1+\frac{2}{k}+\frac{2}{k}\sqrt{k+1}\right)^2$ \\
\cline{1-4}
\end{tabular}
\end{center}
\caption {Thresholds for related sets} \label{tab2}
\end{table}

In Table \ref{tab2} we gather the thresholds for the set of absolutely PPT states ($\mathrm{APPT}$) from \cite{cny} and for $\mathrm{GER}$ and $\mathrm{ARED}$. For $k=2$, the threshold value computed for $\mathrm{ARED}$ reads $c=7+4\sqrt{3}$; this value coincides with the one obtained for $\mathrm{APPT}$, which is in agreement with the fact that $\mathrm{APPT}_{n,2}=\mathrm{ARED}_{n,2}$.
The thresholds for $\mathrm{GER}$ are the same as the thresholds for $\mathrm{APPT}$, which shows that $\mathrm{GER}$ is a very good approximation to $\mathrm{APPT}$ (see also \cite{jlnr}). Since the sets $\mathrm{SEPBALL}$ and $\mathrm{LS}_p$  depend only on the product $d=nk$, it is sufficient to consider only one asymptotic regime ($d\to\infty$), and thus we consider a separate table (Table \ref{tab3}).

Let us finish this work with a list of open questions:
\begin{enumerate}
\item Find a description of the set $\mathrm{ARLN}$ of quantum states satisfying the absolute version of the realignment criterion; following \cite{ajr}, this is a superset of $\mathrm{APPT}$. Compute the thresholds for the set $\mathrm{ARLN}$ in different asymptotic regimes.
\item Give a simple description (or tight bounds) for the set $\mathrm{ASEP}$. Using this description, compute the thresholds for the set, in different asymptotic regimes. The values of these thresholds, compared to those for $\mathrm{APPT}$, could invalidate the conjecture \cite{ajr} that $\mathrm{ASEP} = \mathrm{APPT}$.
\end{enumerate}

\begin{table}
\begin{center}
\begin{tabular}{cc|c|c|c|}
\cline{2-4}
&\multicolumn{3}{|c|}{Asymptotic regime $d=nk\to\infty$} \\
\cline{1-4}
\multicolumn{1}{|c}{\multirow{2}{*}{$\mathrm{SEPBALL}$ } }
&\multicolumn{3}{|c|}{$s\sim cd^2$} \\
\cline{2-4}
\multicolumn{1}{ |c  }{}
&\multicolumn{3}{|c|}{$c=1$} \\
\cline{1-4}
\multicolumn{1}{|c}{\multirow{3}{*}{$\mathrm{LS_{p}}$} }
&\multicolumn{3}{ |c| }{$s\sim cd$}\\
\cline{2-4}
\multicolumn{1}{ |c  }{}
&\multicolumn{1}{ |c| }{$p\geq 2$ fixed }
&$1\ll p=o(d)$
&$p=\lfloor td\rfloor$, $t\in (0,1)$ \\
\cline{2-4}
\multicolumn{1}{ |c  }{}
&\multicolumn{1}{ |c| }{$c=\left(1+\frac{2}{\sqrt{p}-1}\right)^2$ }
&$c=1$
&$c=1-t$ \\
\cline{1-4}
\end{tabular}
\end{center}
\caption {Thresholds for $\mathrm{SEPBALL}$ and $\mathrm{LS}_p$ } \label{tab3}
\end{table}

\end{document}